\documentclass[journal,onecolumn]{IEEEtran}
\usepackage{algorithm}
\usepackage{algpseudocode}
\usepackage{array}
\usepackage[caption=false,font=normalsize,labelfont=sf,textfont=sf]{subfig}
\usepackage{textcomp}
\usepackage{stfloats}
\usepackage{bm}
\usepackage{url}
\usepackage{amsmath, amsthm, amssymb, amsfonts, mathtools}
\usepackage{enumitem}
\usepackage{graphicx}
\usepackage{romannum}
\usepackage{multirow}
\usepackage{array} 
\usepackage{diagbox} 
\graphicspath{ {./images/} }
\usepackage{verbatim}
\usepackage{graphicx}
\usepackage{balance}
\usepackage{xcolor}
\newcommand{\Spdf}[3]{S\left( \alpha, #1, #2, #3 \right)}
\newcommand{\eqdef}{\,\hat{=}\,}

\makeatletter 
\@ifundefined{theorem}{%
\newtheorem{theorem}{Theorem}
\newtheorem*{theorem*}{Theorem}

\newtheorem{definition}{Definition}

\newtheorem{remark}{Remark}

\newtheorem{properties}{Properties}
}{}

\makeatother
\newcommand{\E}[1]{\mathsf{E}\left[{#1}\right]}   
\newcommand{\vect}[1]{\mathbf{#1}} 
\def\pz{f_{\tilde{Z}_{\alpha}} \!\!}

\def\vtheta{\boldsymbol\theta}
\def\sbs{\!}
\newcommand{\atilde}[1]{\tilde{#1}_{\alpha}\sbs}
\def\pzv{f_{\tilde{\vect{Z}}_{\alpha}} \!\!}

\def\sgn{\mathsf{sgn}}  
\def\RID{R^{\mathsf{(I)}\!}(D)}
\def\tRID{\tilde{R}^{\mathsf{(I)}\!}(D)}
\def\vx{\vect{x}}
\def\vX{\vect{X}}
\def\Xb{\textbf{X} }
\def\xb{\textbf{x} }
\def\Xhd{\hat{X}_{\!\scriptscriptstyle\Delta}}
\def\SaS{\textrm{S}$\alpha$\textrm{S} }
\newcommand{\Normal}[2]{\mathcal{N}\left({#1}, {#2}\right)}
\def\Reals{\mathbb R} 
\def\Integers{\mathbb Z} 
\newcommand{\set}[1]{\mathcal{#1}}   

\newcommand{\sa}[1]{s_{\alpha} \! \left( {#1} \right)}
\newcommand{\sat}[1]{\tilde{s}_{\alpha} \! \left( {#1} \right)}

\begin{document}
\title{A Framework for Lossy Compression of Heavy-Tailed Sources
}

\author{Karim Ezzeddine, Jihad Fahs$^*$,~\IEEEmembership{Member,~IEEE,} and Ibrahim Abou-Faycal,~\IEEEmembership{Member,~IEEE}
\thanks{This paper was presented in part at the 2025 IEEE International Symposium on
Information Theory. 

The authors are with the Department of Electrical and Computer
Engineering, American University of Beirut, Beirut, Lebanon
(e-mail: \{ke51,jihad.fahs,ibrahim.abou-faycal\}@aub.edu.lb).}}


\maketitle

\begin{abstract}
We study the rate-distortion problem for both scalar and vector memoryless heavy-tailed $\alpha$-stable sources ($0 < \alpha < 2$). Using a recently defined notion of ``strength" as a power measure, we derive the rate-distortion function for $\alpha$-stable sources subject to a constraint on the strength of the error and show it to be logarithmic in the strength-to-distortion ratio. We show how our framework paves the way for finding optimal quantizers for $\alpha$-stable sources and other general heavy-tailed ones. In addition, we study high-rate scalar quantizers and show that uniform ones are asymptotically optimal under the error-strength distortion measure. We compare uniform Gaussian and Cauchy quantizers and show that more representation points for the Cauchy source are required to guarantee the same quantization quality. Our findings generalize the well-known results of rate-distortion and quantization of Gaussian sources ($\alpha = 2$) under a quadratic distortion measure. 
\end{abstract}

\begin{IEEEkeywords}
Heavy-tailed, alpha-stable, Cauchy, rate-distortion, compression, uniform quantizers, $\alpha$-power.
\end{IEEEkeywords}

\section{Introduction}
\IEEEPARstart{I}{n} 1959,  Shannon introduced source coding with a fidelity criterion~\cite{shannon1959coding}, also known as rate-distortion theory or lossy data compression. At its core, rate-distortion theory investigates the optimal trade-off between a given source's compression rate and the incurred distortion between that source and its reconstructed version. In his paper~\cite{shannon1959coding}, Shannon studied discrete alphabets and posited that everything could be extended to continuous alphabets with appropriate adjustments. Subsequently, a more elaborate treatment of the rate-distortion theory for continuous alphabets was made in~\cite{BERGER1968254} and more recently in~\cite{Polyanskiy_Wu_2024}. Naturally, rate-distortion theory proves beneficial across various domains when one has certain continuous measurements, as it aids in determining the attainability of a given rate of compression under some prescribed distortion and vice versa. Applications include, for example, identifying the fundamental limit on how much a machine learning model can be compressed~\cite{gao2019rate}. Another application can be found in~\cite{7728150} where the authors show, based on rate-distortion theory, how several stimulus-response curves that are frequently observed in biological signaling pathways arise naturally as the optimal decision strategy.

Beyond these applications, rate-distortion theory is tightly related to the concept of quantization in communications and signal processing, a process in which an analog source is represented by digital values. When quantizing any analog source, the goal is to find the minimum number of representation levels required to describe the source while maintaining the resulting distortion below some threshold level. It is therefore clear that rate-distortion theory provides the fundamental limits on achievable compression rates and hence the quantizer's size in bits given some distortion target and vice versa. Naturally, quantization finds applications in audio and image processing, robotics, sensor networks, where it aids in the representation and transmission of continuous data~\cite{quantimages,sensors}. In machine learning, quantization is used in large language models and deep neural networks to reduce memory usage~\cite{gong2014compressing,liu2025lowbitmodelquantizationdeep,xiao2024smoothquantaccurateefficientposttraining}. Additionally, new quantization techniques are constantly developed and implemented to improve the efficiency of systems with restrictions on hardware and storage capacity~\cite{hubara2016binarized,ieesp1,hardwarequantization}.

The Gaussian distribution has predominantly been used as a model for the continuous source in communications and information theory due to its unparalleled analytical tractability; the fact that it maximizes entropy among all distributions having finite second moments; and the Central Limit Theorem (CLT).  However, Gaussian sources are not universal source models: for example, Gaussian models are not appropriate whenever the source is better characterized by heavy-tailed statistics. In fact, power law distributions\footnote{A density function $f(x)$ is said to have a right (resp. left) tail power law if $\lim f(x)|x|^{\alpha + 1} = k$ for some $k > 0$ and $\alpha$, as $x \rightarrow +\infty$ (resp. $-\infty$).} are found in many real-world data sets~\cite{Tsih1995,crovella1998heavy,marcel2008,clauset2009power}. Under such scenarios, a family of probability distributions known as the $\alpha$-stable distributions\footnote{$\alpha$-stable distributions are power laws with $0 < \alpha < 2$.} represents a natural extension of the Gaussian law by virtue of a Generalized CLT (GCLT)~\cite{kolmo}. Moreover, it is found that $\alpha$-stable sources are good models for noise sources in a multitude of applications in~\cite{784467,nbrm2001,kuzer2004,nolancirc2013,he2014robust}. The authors in~\cite{bellido1993} show that the weights learned by backpropagation systematically deviate from Gaussianity. The results in~\cite{jipeng2024} uncover the heavy-tailed nature of neural networks weights having significant statistical dependence across layers and neurons. Moreover, the authors demonstrate that modeling weights with $\alpha$-stable laws better capture training dynamics and generalization properties. 
The $\alpha$-stable distributions appear to provide a more suitable model than the Gaussian one in various cases, making them reasonable candidates to investigate across different setups. 

The fact that $\alpha$-stable distributions possess infinite second moments~\cite{samorodnitsky1996stable} raises the following natural questions: What is an appropriate distortion measure when considering an $\alpha$-stable source in the rate-distortion problem? What are the optimal quantization points when quantizing an $\alpha$-stable random source or a heavy-tailed one in general? The absence of a finite second moment for \(\alpha\)-stable distributions makes it inappropriate to use the Mean Square Error (MSE) as a distortion measure. 
However, numerical  attempts are made as in~\cite{msealphastable} where the rate-distortion function for $\alpha$-stable sources is approximated using the Blahut-Arimoto algorithm \cite{blahut1972computation}. Alternative metrics to the MSE such as Mean Square Root Absolute Error (MSRAE) and Mean Absolute Error (MAE) are used as distortion measures for quantizing \(\alpha\)-stable sources in the range \( \alpha \in [1, 2] \) in~\cite{quantHeavy}. However, these distortion measures exhibit limitations: 
\begin{itemize}
\item[1-] They are valid for certain ranges of $\alpha$ (MAE is applicable when \(\alpha \geq 1\) and MSRAE is valid for \(\alpha > 0.5\))
\item[2-] These metrics suffer from a discontinuity property; the MAE metric for example changes from being finite  for $\alpha = 1 + \epsilon$, for some $\epsilon > 0$, to infinite  for $\alpha = 1 - \epsilon$. This is not natural as the source statistics remain ``approximately" the same.
\end{itemize}
Up to our knowledge, the literature lacks a unified theoretical analysis of the rate-distortion function of heavy-tailed sources. In this paper, we answer these questions by employing a {\em strength} quantity on the error as a distortion measure, a notion that was introduced in~\cite{fahs2017information}, and adopted later for the specific example of the Cauchy variable in~\cite{verdu} where the author investigated the rate-distortion function of a scalar Cauchy source. A key ingredient in the proof of~\cite{verdu} was in showing that the {\em strength} is convex for Cauchy mixtures. Unfortunately, the proof was not presented with sufficient details and is tightly related to the closed-form expression of the Cauchy Probability Density Function (PDF). Our main contributions are four-fold: 
\begin{itemize}
\item First, we generalize the Gaussian and Cauchy rate-distortion results to the family of symmetric $\alpha$-stable distributions by adopting a generic approach, one that does not rely on a ``potential" convexity of the {\em strength} measure. This possible convexity is challenging to analyze given the general non closed-form expressions of the symmetric $\alpha$-stable PDFs. 
\item Second, we derive the rate-distortion functions for $\alpha$-stable vector sources including the Cauchy case, where both the "sub-Gaussian" and independent $\alpha$-stable vectors are considered. While we do not consider in this work the rate-distortion problem for other heavy-tailed sources, the $\alpha$-stable distributions serve as a benchmark for compressing any heavy-tailed distribution, just as the Gaussian source serves as a benchmark for distributions with a finite second moment. This is due to the fact that both are central distributions —in the sense of the CLT and the GCLT, respectively.
\item Third, we showcase that our proposed framework enables a systematic approach to quantizing not only $\alpha$-stable sources but also any heavy-tailed source; a framework that overcomes the drawbacks of using fractional lower order moments like the MAE and the MSRAE. Using our framework,  we present an algorithm for designing a strength-optimal quantizer. We analyze the properties of such a quantizer and apply the proposed algorithm to find a strength-optimal quantizer for a Cauchy source. We use the found quantizer in the context of communicating under additive Cauchy noise and show an increase in transmission rates compared to those obtained when using a Gaussian quantizer with identical power.
\item Finally, we conduct an analytical study of the performance of uniform scalar quantizers in the high-rate regime under the strength measure. We also analyze non-uniform quantizers in the high-rate regime and show that the optimal solution coincides with the uniform quantizer. As such, we conclude that similarly to standard quantizers, there is little advantage in considering non-uniform quantizers over uniform ones at high rate. Through numerical analysis, 
we quantify the extra cost in terms of the number of representation points needed to achieve the same quantization quality for a Cauchy source compared to a Gaussian one. 
\end{itemize}

The rest of this paper is organized as follows: In Section~\ref{sc:Pre} we present stable distributions and the notion of strength. 
The rate-distortion results are presented in Section~\ref{sc:RD} where we state the scalar rate-distortion function of all stable sources under a strength constraint on the error in addition to extensions to the vector case. The proofs of the results are deferred to Appendix~\ref{app:proofs}. Section~\ref{sc:Quant} is dedicated to finding and applying strength-optimal quantizers and Section~\ref{sc:QuantHR} discusses the performance of uniform and non-uniform quantizers in the high-rate regime in addition to providing performance analysis through a numerical example. 
Finally, Section~\ref{sc:Concl} concludes the paper.  

{\bf \em Notations:} Lowercase regular font symbols $\{x, y, \cdots\}$ represent scalar deterministic quantities whereas uppercase ones $\{X, Y, \cdots\}$ are scalar Random Variables (RV)s. Lowercase bold font characters $\{{\bf x}, {\bf y}, \cdots\}$ are reserved for deterministic vector quantities and their uppercase counterparts $\{\vect{X}, \vect{Y}, \cdots\}$ are random vectors. We denote by $\Xb^n$ a sequence of $n$ vectors, i.e. $\Xb^n$  = $(\Xb_1, \cdots ,\Xb_n)$, where each $\Xb_k$ is a vector in $\Reals^d$. We write $\Xb \overset{d}{=} {\bf Y}$ whenever they have identical probability laws. The notation $\|.\|$ refers to the $L_2$ norm in $\mathbb{R}^d$.

\section{Preliminaries}
\label{sc:Pre}

We present hereafter stable random variables and sub-Gaussian symmetric alpha-stable vectors. We also formalize the notion of their ``strength".

\subsection{Stable Distributions}
\begin{definition}[Univariate Stable Distributions~{\cite[p.3 Definition 1.1.4]{samorodnitsky1996stable}}] A random variable $X$ is said to have a {\em stable\/} distribution if for any $k \geq 2$, there is a positive number $C_k$ and a real number $D_k$ such that
    \[
        X_1 + X_2 + \cdots + X_k \overset{d}{=} C_k X + D_k,
    \]
    where $\{X_1,X_2,\cdots,X_k\}$ are independent copies of $X$.
\end{definition}

\begin{definition}
    [Equivalent Definition~{\cite[p.5 Definition 1.1.5]{samorodnitsky1996stable}}] A random variable $X$ is said to be {\em stable\/} and denoted $X \sim \Spdf{\beta}{\gamma}{\delta}$ whenever its characteristic function is of the form
    \[
    \varphi_X(\omega) = \exp\left[i\delta\omega -  \gamma^\alpha\left(1 - i\beta \sgn(\omega)\Phi(\omega)\right)|\omega|^\alpha\right],
    \]
    where $0 < \alpha \leq 2$, $-1 \leq \beta \leq 1$, $\gamma > 0$, and $\delta \in \Reals$, and where
    $\sgn(\omega)$ is the sign of $\omega$ and $\Phi(\cdot)$ is given by:
    \[
    \Phi(\omega) = \begin{cases}
    \tan\left(\frac{\pi\alpha}{2}\right) & \text{if } \alpha \neq 1, \\
    - \frac{2}{\pi} \ln |\omega| & \text{if } \alpha = 1.
    \end{cases}
    \]
\end{definition}
Note that the case $\alpha = 2$ corresponds to a Gaussian RV. Whenever the Gaussian law is excluded, i.e., $\alpha < 2$ the RV is said to be {\em $\alpha$-stable\/}.

\begin{remark} When $X \sim S(\alpha,\beta,\gamma,\delta)$ with $\beta = \delta =0$, $\phi_X(\cdot)$ becomes $\varphi_X(\omega) = \exp\left[ -  \gamma^\alpha|\omega|^\alpha\right]$, $X$ is symmetric and we refer to it as {\em symmetric alpha-stable\/} (\SaS\!\!) and we write $X \sim S(\alpha,\gamma)$. 
\end{remark}

\begin{definition}[Sub-Gaussian\footnote{In some texts, the term sub-Gaussian refers to distribution functions whose
tails are faster than those of a Gaussian. In this work, sub-Gaussian \SaS is used in the sense of Definition~\ref{def:subgaus}.} Symmetric Alpha-Stable Vector~{\cite[p.78, Definition 2.5.2]{samorodnitsky1996stable}}] Let \(0 < \alpha < 2\) and let \(A \sim S \left(\frac{\alpha}{2}, 1, \cos(\frac{\pi \alpha}{4})^{\frac{2}{\alpha}}, 0\right) \) be a ``totally skewed" one-sided alpha-stable distribution. Define \(\vect{G} = (G_1, \cdots, G_d)^{\text{t}} \) to be a zero-mean Gaussian vector in \(\Reals^d\). Then the random vector
\[ \vect{N} = A^{\frac{1}{2}} \vect{G} = \left( A^{\frac{1}{2}} G_1, \cdots, A^{\frac{1}{2}} G_d \right)^{\text{t}}, \]
is called a {\em sub-Gaussian\/} Symmetric Alpha-Stable (\SaS\!\!) random vector in \(\Reals^d\) with underlying vector \(\vect{G}\). In particular, each component \(A^{\frac{1}{2}} G_i, \, 1 \leq i \leq d\) is a \SaS variable with characteristic exponent \(\alpha\).
\label{def:subgaus}
\end{definition}

Whenever the components 
of the underlying vector $\vect{G}$ are Independent and Identically Distributed (IID) zero-mean with variance $2 \gamma^2$ for some $\gamma > 0$, we will denote the \SaS random vector by $\textbf{S}(\alpha,\gamma)$. In \cite[p.79 Proposition 2.5.5]{samorodnitsky1996stable}, it is shown that the characteristic function of a vector $\vect{X} \sim \textbf{S}(\alpha,\gamma_X)$ is:
\begin{equation*}
    \varphi_{\vect{X}}(\vtheta) 
    = e^{- \gamma_X^\alpha \|\vtheta\|^\alpha}.
\end{equation*}

Several useful properties of stable distributions are listed in Property~\ref{prop} in Appendix~\ref{app:prop}.

\subsection{The Strength of a Random Variable / Vector}

Let $\atilde{\vect{Z}} \sim 
\textbf{S} \left(\alpha, \left( \frac{1}{\alpha} \right)^{\frac{1}{\alpha}} \right)$ be {\em a reference\/} $d$-dimensional \SaS vector, i.e.,
\begin{itemize}
    \item when $\alpha \neq 2$: a reference sub-Gaussian \SaS vector with underlying Gaussian vector having IID zero-mean components with variance $\sigma^2 = 2 \gamma^2 = 2 \left( \frac{1}{\alpha} \right)^{\frac{2}{\alpha}}$,
    \item when $\alpha = 2$: a reference Gaussian vector of IID components
    with mean zero and variance 1,
\end{itemize}
with $\pzv(\cdot)$ being its PDF and $h(\atilde{\vect{Z}})$ its differential entropy. We note that since \SaS densities are positive, unimodal, and have finite logarithmic moments (see property~\ref{prop}-2 in Appendix~\ref{app:prop}), then they have finite differential entropy~\cite{rioulEPI2011}.
\begin{figure}[!t]
    \begin{center}
    \includegraphics[width=0.489\linewidth]{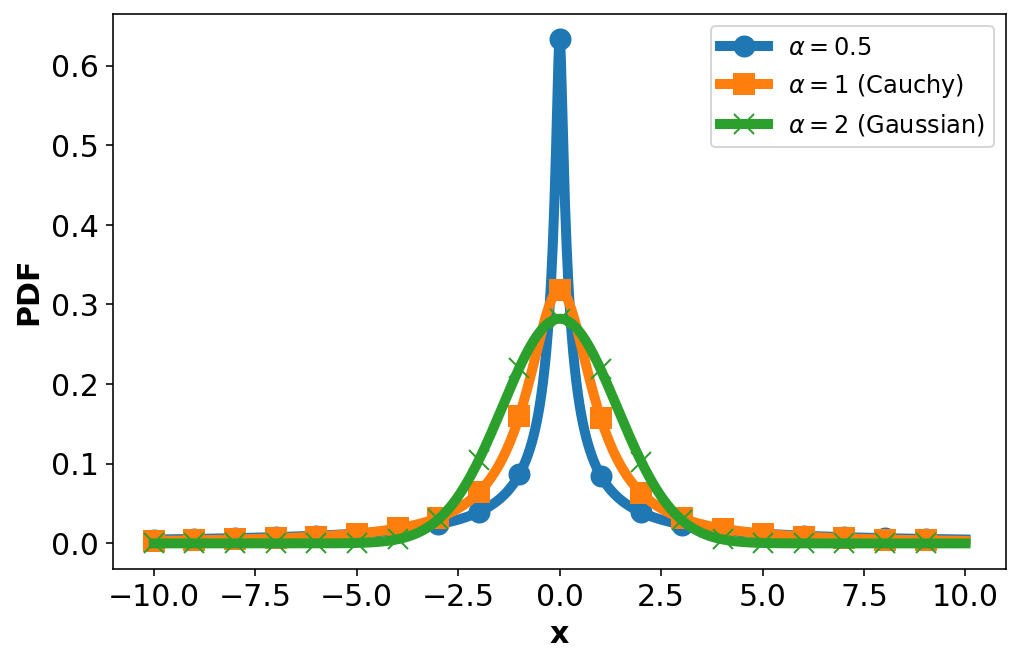}
\caption{A plot of $f_{X}(x)$ where $X \sim S(\alpha,1)$ for different values of $\alpha$.}
    \label{fig:stableDist}
    \end{center}
\end{figure}

 
\begin{definition}\cite[Definition 3]{fahs2017information}
\label{def:str}
The $\alpha$-power or strength of a $d$-dimensional random vector \( \vX \) is:
\begin{equation}
\sa{\vX} \eqdef \inf \left\{ s>0 : -\E{ \log \pzv \left( \frac{\vX}{s} \right)}  \leq h(\atilde{\vect{Z}}) \right\}.
\label{eq:powdef}
\end{equation}
\end{definition}

Note that when \( \vect{X} \) is deterministically equal to the zero vector, $ s_\alpha(\vX)=0 $. Indeed, for $\Xb = {\bf 0}$, any $s > 0$ would satisfy the constraint \eqref{eq:powdef} since $-\log \pzv \left({\bf  0} \right) \leq -\log \pzv \left( \vx \right)$ for all $\vx \in \Reals^d$.

Several basic properties of this strength measure are listed in Property~\ref{prop:stableprop} in Appendix~\ref{app:prop}. It is worth noting that Property~\ref{prop:stableprop}-1 states that, unless the variable is deterministically equal to zero, $s = s_\alpha(\vX)$ achieves the equality in~\eqref{eq:powdef}.

For the two special cases where $\pzv(\cdot)$ has a closed-form expression, we get:
\begin{itemize}
    \item When \( \alpha =2 \)~\cite{fahs2017information}, \( \tilde{\vect{Z}}_2 \) is a {\bf 0}-mean Gaussian vector with independent components with unit variance each and 
    \begin{equation}
    \label{eq:stren2}
    s_2(\vect{X}) = \sqrt{\frac{\E{\|\vect{X}\|^2}}{d}}.
    \end{equation}
    This is consistent with the view that the strength is a generalization of the standard deviation for variables lacking a finite second moment.
    \item When \( \alpha =1 \)~\cite{verdu}, \( \tilde{\vect{Z}}_1 \) is a ``standard" circular Cauchy random vector $\textbf{S} \left( 1, 1 \right)$ and \( s_1(\vect{X}) \) is the unique constant such that
    \begin{equation}
    \E{\ln\left(1+\frac{\|\vect{X}\|^2}{s_1(\vect{X})^2}\right)} = \ln(4) + \psi \left( \frac{d+1}{2} \right) + \gamma_{\text{e}},
    \label{eq:stren1}
    \end{equation}
    where $\gamma_{\text{e}}$ is the Euler–Mascheroni constant and $\psi$ is the digamma function. 
\end{itemize}
\begin{remark} Consider a random vector $\vX$ with strength $0<s_\alpha({\bf X})<A$ for some $A>0$. Then, $A\tilde{{\bf Z}}_\alpha = \underset{\vX}{\text{argmax}}\ h\left( \vX \right)$. 
\end{remark}
Proven in~\cite[Theorem 1]{fahs2017information}, this property shows that among all distributions having their strength upper bounded by some constant $A$, the \SaS distribution $A \tilde{{\bf Z}}_\alpha$ maximizes entropy. This is a natural extension to the well-known result stating that Gaussian distributions maximize entropy among all distributions having an upper bounded second moment.

\section{The Rate-distortion Function for Stable Sources}
\label{sc:RD}

In this section, we present the rate-distortion results for sources $\Xb \sim \textbf{S}(\alpha,\gamma_{\vX})$, for all $\alpha \in (0,2]$. Since an MSE distortion measure is sensible {\em only\/} for $\alpha = 2$, we measure the distortion between $\vX$ and its representation $\hat{\vX}$ through the strength of the error: $\sa{\vX-\hat{\vX}}$.
This setup deviates from the classical one in the fact that the distortion --measured in strength-- directly acts on random variables and not through sample values. This results in a couple of challenges in proving the rate distortion theorem:
\begin{itemize}
\item Achievability: The direct part of the theorem will require defining a new ``equivalent" distortion constraint that is sample-value based.   
\item The converse part will require working with the expression of the rate-distortion function explicitly, instead of relying on the convexity of the constraint in the distributions; Said differently, the convexity of the rate-distortion function cannot be established a priori. When it comes to finding the expression of the rate-distortion function, we will rely on the fact that stable distributions are entropy maximizers under a strength constraint.
\end{itemize}

\subsection{The Rate-Distortion Problem}

Consider a memoryless source $\Xb^{n} \overset{\textup{i.i.d.}}{\sim} P_\Xb$. We consider the rate-distortion problem where we need to find the minimal rate $R$ of compression for the source under the constraint 
\begin{equation}
\sa{ \Xb-\hat{\Xb} } \leq D,
\label{eq:DistTB}
\end{equation}
where $\hat{\Xb}$ is the reconstruction of $\Xb$ and $s_\alpha(\cdot)$ is used to measure the distortion between the source and its reconstruction.
 
To make the statement formal, we introduce the following quantities in parallel to what is presented in~\cite{Polyanskiy_Wu_2024}: 
\begin{enumerate}[label=\textbullet]
    \item We measure the distortion between a length-$n$ sequence of the source ${\bf X}^n$ and its representation sequence ${\bf \hat{X}}^n$ as the time-average of the map $s_\alpha(\cdot)$ given by definition \ref{def:str}: 
    \[
    \frac{1}{n} \sum_{k=1}^{n} \sa{\Xb_k - \hat{\Xb}_k}.
    \]
    \item An $(n,M,D)$-code for $\Xb^n$ consists of 
    \begin{itemize}
        \item[-] a compressor $f_n:\bm{\chi}^{n} \to \set{M} \eqdef \{1,2,\cdots,M\}$ 
        \item[-] and a decoder $g_n: \set{M} \to \hat{\bm{\chi}}^n$,
    \end{itemize}
    such that $\frac{1}{n} \sum_{k=1}^{n} \sa { \Xb_k - g(f({\Xb^n}))_k } \leq D$. Note that $\bm{\chi}$ and $\hat{\bm{\chi}}$ are the source and its reconstruction alphabets respectively.
    \item The rate-distortion function is defined as 
    \begin{align*}
        R(D) & \eqdef \, \underset{n\to \infty}{\lim \, \sup}\ \frac{1}{n} \log M^{*}(n,D) \\
        \text{where } \quad M^{*}(n,D) & \eqdef \min \, \big\{ M: \exists \,\,(n,M,D)\text{-code} \big\}.
    \end{align*}
    \item The information rate-distortion function is defined as: 
    \begin{align*}
        \RID & \eqdef \underset{n\to \infty}{\lim \, \sup}\ \frac{1}{n} \, \phi_{\Xb^{n}}(D) \\
        \text{where } \quad \phi_{\Xb^n}(D) & \eqdef  \underset{\substack{P_{\hat{\Xb}^n|\Xb^n}: \\ \frac{1}{n} \sum_{1}^{n} \sa{ \Xb_k - \hat{\Xb}_k} \leq D}}{\inf} I\left(\Xb^n;\hat{\Xb}^n\right).
    \end{align*}
\end{enumerate} 
In contrast, in the standard formulation of the rate-distortion problem, given by a map $d(\cdot,\cdot)$:
\begin{equation*}
    d: \bm{\chi} \times \hat{\bm{\chi}} \to \Reals,
\end{equation*}
the distortion between a length-$n$ source sequence of characters ${\bf x}^n$ and its representation sequence ${\bf \hat{x}}^n$ is defined as \\ $d(\xb^n, \hat{\xb}^n) \eqdef \frac{1}{n} \sum_{k=1}^{n} d(\xb_k, \hat{\xb}_k)$. This property is referred to as the separability of the distortion measure in \cite{Polyanskiy_Wu_2024} where the distortion between two sequences of characters is equal to the time average of the per-character distortions. Naturally, the goal is to find the minimal compression rate for the source under the constraint:
\begin{equation*}
    \E{d \left( \Xb^n, \hat{\Xb}^n \right)} \leq D,
\end{equation*}
meaning that for an $(n,M,D)$-code, the compressor and the decoder must be such that $\E{d(\Xb^{n}, g(f(\Xb^{n})))} \leq D$. The remaining definitions, specifically for the rate-distortion function and the information rate-distortion function, are identical, with the exception that $d(\cdot,\cdot)$ replaces $s_\alpha(\cdot)$.
We now state the rate-distortion theorem due to Shannon \cite{shannon1959coding}, which was particularly well-defined and proven for continuous alphabets in \cite{Polyanskiy_Wu_2024}.

\begin{theorem}[Rate-Distortion Theorem]
\label{th:RDT}
Let $X^{n} \overset{\textup{i.i.d.}}{\sim} P_X$ be a stationary memoryless source, $d: \chi \times \hat{\chi} \to \Reals$ be a separable distortion metric and $D >0$ be the target distortion. If:
\begin{enumerate}
    \item $d(\cdot,\cdot)$ is non-negative 
    \item $D > D_0 \eqdef \inf \, \big\{ D : \phi_{X}(D) < \infty \big\}$
    \item $ D_{\max} \eqdef 
    \inf 
    \, \big\{ \E{d(X,\hat{x})}, \hat{x} \in \hat{\chi} \big\}$ is finite
    \item $\phi_X(D)$ is convex in $D$,
\end{enumerate}
then,
\begin{align}
    R(D) = \RID=\phi_{X}(D).  \notag
\end{align}
\end{theorem}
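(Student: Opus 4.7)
The plan is to prove the theorem in two stages: first reduce the multi-letter information rate-distortion to a single-letter quantity by showing $\RID = \phi_X(D)$; then match the operational rate-distortion $R(D)$ to this single-letter quantity via converse and achievability bounds. The four hypotheses each play a distinct role: non-negativity of $d$ enters in handling tail terms in expectations, convexity of $\phi_X$ drives the single-letterization, $D > D_0$ ensures that a feasible kernel with finite mutual information exists, and finiteness of $D_{\max}$ supplies a safe fallback reconstruction in the random-coding step.

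For the single-letterization, I would start from any $P_{\hat{X}^n|X^n}$ meeting the distortion budget. Setting $D_k = \E{d(X_k,\hat{X}_k)}$, separability gives $\frac{1}{n}\sum_k D_k \leq D$. Memorylessness of $P_{X^n}$ together with the standard chain-rule bound yields $I(X^n;\hat{X}^n) \geq \sum_k I(X_k;\hat{X}_k) \geq \sum_k \phi_X(D_k)$. Jensen's inequality, together with the convexity and non-increasing monotonicity of $\phi_X$, then produces $\frac{1}{n} I(X^n;\hat{X}^n) \geq \phi_X(D)$, so $\phi_{X^n}(D) \geq n\,\phi_X(D)$; the reverse inequality follows by tensorizing a near-optimal single-letter kernel. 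Hence $\RID = \phi_X(D)$.

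The converse $R(D) \geq \RID$ is then a direct data-processing argument: for any $(n,M,D)$-code with encoder $f_n$ and decoder $g_n$, one has $\log M \geq H(f_n(X^n)) \geq I(X^n; g_n(f_n(X^n))) \geq \phi_{X^n}(D) = n\,\phi_X(D)$, where the final inequality uses that the induced conditional kernel satisfies the distortion constraint. Dividing by $n$ and taking $\limsup$ yields $R(D) \geq \phi_X(D)$.

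Achievability, $R(D) \leq \RID$, is the main technical obstacle. My approach is a random-coding argument: fix $\epsilon,\delta > 0$, pick a single-letter kernel $P^*_{\hat{X}|X}$ nearly achieving $\phi_X(D-\delta)$, tensorize it, and draw $M = \lceil 2^{n(\phi_X(D-\delta)+\epsilon)}\rceil$ codewords IID from the induced product marginal on $\hat{\chi}^n$. A distortion-typicality argument shows that with probability tending to one the source realization $X^n$ admits some codeword within average distortion $D-\delta/2$; on the complementary low-probability event, one defaults to a constant reconstruction $\hat{x}_0$ with $\E{d(X,\hat{x}_0)}$ close to $D_{\max}$, whose finiteness is essential to keep the bad-event contribution $o(1)$. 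The delicate point is obtaining a uniform covering estimate on a continuous alphabet with a possibly unbounded distortion, which typically requires a truncation argument and the continuity of $\phi_X$ in $D$ (itself a by-product of convexity and the hypothesis $D > D_0$), in the spirit of the continuous-alphabet treatment of \cite{Polyanskiy_Wu_2024}. Letting $\epsilon,\delta \to 0$ then yields $R(D) \leq \phi_X(D)$, completing the proof.
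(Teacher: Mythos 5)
The paper itself gives no proof here---it simply cites \cite[Theorem 25.1]{Polyanskiy_Wu_2024} and, in the surrounding discussion, highlights exactly the two converse ingredients you use ($R(D)\geq\RID$ and the convexity-driven single-letterization $\RID=\phi_X(D)$). Your sketch is a correct reconstruction of the standard argument behind that reference: the single-letterization and data-processing converse are sound, and the achievability outline correctly locates the delicate points (random covering on a continuous alphabet with possibly unbounded $d$; using $D_{\max}<\infty$ together with a uniform-integrability/truncation step to absorb the covering-failure event), though at this level of detail it remains a sketch rather than a complete proof.
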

\begin{proof}
See \cite[Theorem 25.1]{Polyanskiy_Wu_2024}
\end{proof}

We highlight the fact that  the convexity of $\phi_X(D)$ is not explicitly stated as a requirement in~\cite[Theorem 25.1]{Polyanskiy_Wu_2024}; For under a standard ``textbook" formulation of the distortion constraint~\eqref{eq:DistTB}, the linearity of $\E{d(X,\hat{X})} =\sum_{(x,\hat{x})}P_X(x)P_{\hat{X}|X}(\hat{x}|x)d(x,\hat{x})$ in the distributions and the convexity of $I(P_X;P_{\hat{X}|X})$ in $P_{\hat{X}|X}$ imply that $\phi_X(D)$ is always convex in $D$~\cite[p. 316-317]{Cover2006}. 

Note that convexity of $\phi_X(D)$ is crucially used  in standard proofs of the converse $R(D) \geq \phi_X(D)$: Indeed, one can first show that $R(D) \geq \RID$ \cite[Corollary 24.6]{Polyanskiy_Wu_2024}, while the convexity of $\phi_X(D)$ yields $\RID \eqdef \underset{n\to \infty}{\lim \ \sup}\ \frac{1}{n} \phi_{X^{n}}(D) = \phi_X(D)$.


\subsection{The Rate-Distortion Function}

We study the rate-distortion problem for stable sources $\Xb$ 
for all $\alpha \in (0,2]$. Since an MSE distortion measure is sensible {\em only\/} for $\alpha = 2$, we measure the distortion between a length-$n$ sequence of the source ${\bf X}^n$ and its representation sequence ${\bf \hat{X}}^n$ by the strength of the error defined as $\frac{1}{n} \sum_{k=1}^{n} \sa{ \Xb_k - \hat{\Xb}_k}$, as indicated previously. 

Note first that adopting in the scalar case $\sa{X-\hat{X}}$, for $\alpha \in (0,2]$ as a distortion measure is a generalization of the following previously-studied two cases: 
\begin{enumerate}
    \item $\alpha = 2$: in which case $X \sim S(2,\gamma) \equiv \Normal{0}{\sigma^2 = 2 \gamma^2}$ and $s_2(X - \hat{X}) \equiv \sqrt{E{\left[|X-\hat{X}|^2\right]}} \leq D$. It is well known that in this scenario,
    \begin{equation}
    R(D)=\frac{1}{2} \log \left[ \frac{\sigma^2}{D^2} \right] = \frac{1}{2} \log \left[ \frac{2 \gamma^2}{D^2} \right] \label{eq:stdG}.
    \end{equation}
    \item $\alpha = 1$: is the setup studied in~\cite{verdu} where the rate-distortion function of a centered Cauchy source is derived.
\end{enumerate}
In what follows, we state our main generalization of the rate-distortion function to all scalar symmetric $\alpha$-stable sources, then we extend the result to vector cases. The proofs are deferred to Appendix~\ref{app:proofs}. 

\begin{theorem}[Extension to symmetric stable sources]
\label{th:extstable}
Consider a stationary memoryless \SaS source $\{X_n\}_n \overset{\textup{i.i.d.}}{\sim} P_X = S(\alpha,\gamma_X)$ with strength $s_\alpha(X) =  (\alpha)^{\frac{1}{\alpha}} \gamma_X$. The rate-distortion function of $X$ is given by:
\begin{equation}
\label{eq:star}
R(D) = \mathcal{R}_X(D) = \max \left\{ \log \left[ \frac{s_\alpha(X)}{D} \right], 0 \right\}.
\end{equation}
\end{theorem}

We note that whenever $\alpha = 2$, $X \sim S(2,\gamma_X)$ 
and $s_2(X-\hat{X})=\sqrt{\E{|X-\hat{X}|^2}}$, the results of Theorem~\ref{th:extstable} imply that $R(D) = \log \left[ \frac{s_2(X)}{D} \right] = \log\left(\frac{\sqrt{2} \gamma_X}{D}\right)= \frac{1}{2}\log\left( \frac{\sigma_X^2}{D^2}\right)$ which coincides with~(\ref{eq:stdG}). 
In addition, the rate-distortion function evaluated in~\cite{verdu} for the Cauchy source ($\alpha=1$) coincides with \eqref{eq:star} as well. Figure~\ref{fig:RD} depicts $R(D)$ of various stable sources for different values of the parameter $\alpha$. 

\begin{figure}[!t]
    \begin{center}
    \includegraphics[width=0.489\linewidth]{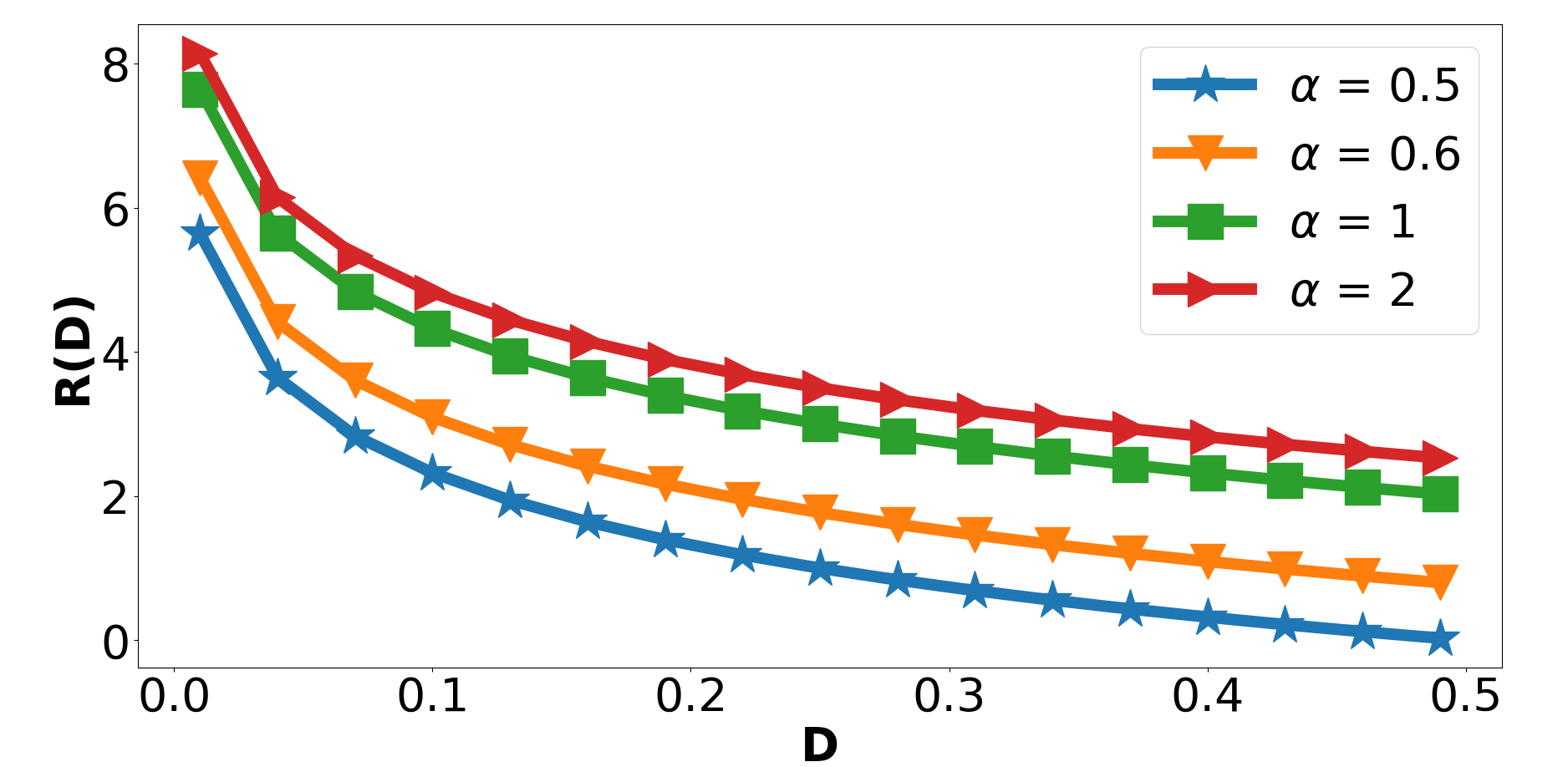}
\caption{The rate-distortion function $R(D)$ whenever $X \sim S(\alpha,2)$ for different values of $\alpha$.}
    \label{fig:RD}
    \end{center}
\end{figure}

\subsection{Extension to Vectors}
We extend the results of Theorem~\ref{th:extstable} to two types of $d$-dimensional stable vector-valued sources.

\begin{theorem}[Extension to {vectors} \Romannum{1}]
\label{th:extToVectors1}
Consider a stationary memoryless sub-Gaussian \SaS source $\{\Xb_n\}_n \overset{\textup{i.i.d.}}{\sim} P_{\Xb}  = \textbf{S}(\alpha,\gamma_{\Xb})$ with underlying $d$-dimensional Gaussian vector \textbf{G} having IID components. Then $s_\alpha(\Xb) =  (\alpha)^{\frac{1}{\alpha}} \gamma_\Xb$, and the rate-distortion function of $\Xb$ is given by:
\begin{equation}
\label{eq:starstar}
R(D)= \max \left\{ \log \left[ \frac{s_\alpha(\Xb)}{D} \right],0 \right\}.
\end{equation}
\end{theorem}

\begin{theorem}[Extension to {vectors} \Romannum{2}]
\label{th:extToVectors2}
Consider a stationary memoryless source $\{\Xb_n\}_n \overset{\textup{i.i.d.}}{\sim} P_{\Xb}$, where \Xb has $d$-independent \SaS components $(X_1, \cdots, X_d)$ with respective strengths $s_\alpha(X_i)$, $1 \leq i \leq d$. Whenever the distortion constraint is\\ $\sum_{j=1}^{d} \left[ \frac{1}{n} \sum_{k=1}^{n} \sa{  {X_k}_j-\hat{X}_{kj} } \right] \leq D$, The rate-distortion function of $\Xb$ is given by:
\begin{equation}
\label{eq:starstarstar}
R(D)=  \sum^{d}_{i=1} \log \left[ \frac{s_\alpha(X_i)}{D_i} \right],
\end{equation}
where
\[
D_i= 
    \begin{cases}
    \lambda, & \text{if} \quad \lambda < s_\alpha(X_i) \\
    s_\alpha(X_i), & \text{if} \quad \lambda \geq s_\alpha(X_i),\\  
    \end{cases}
\]
and $\lambda$ is chosen such that $\sum^{d}_{i=1} D_i = D$.
\end{theorem}
\section{The Quantization Problem}
\label{sc:Quant}

We consider scalar sources $X$ with arbitrary  unimodal and symmetric PDFs hereafter. Without loss of generality, we assume symmetry around 0. We define a new family of scalar quantizers using $s_\alpha(X)$ as a quality measure. This lays down a framework for analyzing the performance of Analog-to-Digital Converters (ADC) for sources that are heavy-tailed. Furthermore, we devise an algorithm that finds the representation points and the quantization regions for the optimal $M$-points quantizer and compute the resulting distortion.     

We provide in what follows a systematic optimal quantizer-design procedure for our setup and we present some results using numerical evaluations, followed by some remarks.

Quantization consists in representing a --possibly continuous-- source $X$ by a discrete random variable $\hat{X}$ taking values among $M$ quantization levels $\{\hat{x}_0,\cdots,\hat{x}_{M-1}\}$ while maintaining some prescribed fidelity criteria. Practically, a quantizer maps $x \in \Reals$ to $\hat{x}_j \in \Reals$ whenever $x \in \set{R}_j = (r_j,r_{j+1}]$. The $\{\set{R}_j\}$'s are a partition of $\Reals$ and define the {\em quantization regions\/}, and the $\{\hat{x}_j\}$'s are the {\em representation points\/}. Naturally, we are interested in finding an optimal $M$-points quantizer (equivalently $M$ quantization regions and $M$ representation points) of a heavy-tailed source, i.e. one that minimizes $\sa{ X - \hat{X}}$: Given $M > 0$, find  
 \begin{equation*}
        \underset{\hat{x}_0, \cdots, \hat{x}_{M-1},\set{R}_0,\cdots, \set{R}_{M-1} }{\arg \min} \, \sa{ X - \hat{X}},
\end{equation*}
where $\sa{ X - \hat{X}}$ is the strength of the error and hence solution of
\begin{multline}
    -\int_{\set{R}_0} f_X(x)\log \left[ \pz\left(\frac{x - \hat{x}_0}{\sa{ X - \hat{X}}}\right)\right] dx
    - \int_{\set{R}_1} f_X(x) \log \left[\pz\left(\frac{x - \hat{x}_1}{\sa{ X - \hat{X}}}\right)\right] dx -\ldots \, \\
     - \int_{\set{R}_{M-1}} \hspace{-7pt} f_X(x) \log \left[ \pz\left(\frac{x - \hat{x}_{M-1}}{\sa{X - \hat{X}}}\right)\right] dx  = h(\tilde{Z}_\alpha). \label{eq:quancons}
\end{multline}

\smallskip
\noindent
Before proceeding, we note that since the PDF of \( X \) is unimodal and symmetric around $0$, then whenever \( M \) is even, half of the representation points will be positive, and the other half will be their negative counterparts. If \( M \) is odd, the same applies, but with an additional representation point at $0$. Therefore, we only search for the positive representation points, as the remaining ones can be easily deduced.

\subsection{Optimality Conditions for Strength-Quantizers}

One can determine necessary conditions of optimality by "fixing" the representation points and studying the quantization regions and vice-versa.

With the classical MSE distortion measure, these necessary conditions are found to be:
\begin{itemize}
\item[(i)] The boundaries of the {\em quantization regions\/} are defined by the midpoints between one representation point and the next. 
\item[(ii)] {\em Representation points\/} are conditional means.
\end{itemize}

When it comes to the proposed strength distortion measure,
we show that property (i) is also a necessary condition for the optimal quantizer. In fact, let $\{\hat{x}_0,\cdots,\hat{x}_{M-1}\}$ be some randomly chosen $M$ quantization points on the real line and choose for all $ 0 \leq j \leq (M - 2)$, $r_j = \frac{\hat{x}_j + \hat{x}_{j+1}}{2}$. Under these fixed parameters, the value of $\sa{X - \hat{X}}$ is determined using Equation~\eqref{eq:quancons}. We show that if any of the $\{r_j\}$'s takes any other value than the midpoint, the resulting power $\sat{X - \hat{X}}$ will necessarily increase. Indeed, choose an arbitrary $0 \leq j \leq (M -2)$ and replace $r_{j}$ by $\tilde{r}_{j} \neq r_j$, such that $\tilde{r}_{j} \in [\hat{x}_{j},\hat{x}_{j+1}]$. We assume without loss of generality that $\tilde{r}_{j} > r_j \left( = \frac{\hat{x}_{j} + \hat{x}_{j+1}}{2} \right),$ and update the decision regions accordingly. Two terms on the left side of~\eqref{eq:quancons} will change with the updated quantization regions to become
\begin{align*}
    & \begin{multlined}[t]
    - \int_{r_{j-1}}^{\tilde{r}_{j}} f_X(x)\log \left[ \pz\left(\frac{x - \hat{x}_j}{\sa{X - \hat{X}}}\right)\right] \, dx 
     - \int_{\tilde{r}_{j}}^{r_{j+1}} f_X(x)\log \left[ \pz\left(\frac{x - \hat{x}_{j+1}}{\sa{X - \hat{X}}}\right) \right]\, dx
    \end{multlined} \\
    = & \begin{multlined}[t]
    - \int_{r_{j-1}}^{{r}_{j}} f_X(x)\log \left[ \pz\left(\frac{x - \hat{x}_{j}}{\sa{X - \hat{X}}}\right)\right] dx
    - \int_{r_{j}}^{\tilde{r}_{j}} f_X(x)\log \left[ \pz\left(\frac{x - \hat{x}_{j}}{\sa{X - \hat{X}}}\right)\right] dx \\
    - \int_{\tilde{r}_{j}}^{r_{j+1}} f_X(x)\log \left[ \pz\left(\frac{x - \hat{x}_{j+1}}{\sa{X - \hat{X}}}\right)\right] dx,
    \end{multlined}
\end{align*}
versus the original quantization rule
\begin{multline*}
    -\int_{r_{j-1}}^{r_{j}} f_X(x)\log \left[ \pz\left(\frac{x - \hat{x}_j}{\sa{X - \hat{X}}}\right)\right] dx 
    - \int_{r_{j}}^{\tilde{r}_{j}} f_X(x) \log \left[\pz\left(\frac{x - \hat{x}_{j+1}}{\sa{X - \hat{X}}}\right)\right] dx \\
    - \int_{\tilde{r}_{j}}^{r_{j+1}} f_X(x)\log \left[ \pz\left(\frac{x - \hat{x}_{j+1}}{\sa{X - \hat{X}}}\right)\right] dx.
\end{multline*}

 We note that only the second term has changed. Since $\forall x \in [r_{j},\tilde{r}_{j}]$, $|x - \hat{x}_{j}| >  |x - \hat{x}_{j+1}|$, then using the fact that $\pz(\cdot)$ is decreasing in $|x|$ (Property~\ref{prop}-3 in the appendix) we get
 \begin{align*}
    -\log\left[\pz\left(\frac{x - \hat{x}_{j+1}}{\sa{X - \hat{X}}}\right)\right] & <  -\log\left[\pz\left(\frac{x - \hat{x}_{j}}{\sa{X - \hat{X}}}\right)\right]
\end{align*}
\begin{equation*}
    \implies  -\int_{r_{j}}^{\tilde{r}_{j}} f_X(x) \log\left[\pz\left(\frac{x - \hat{x}_{j+1}}{\sa{X - \hat{X}}}\right)\right] \, dx < -\int_{r_{j}}^{\tilde{r}_{j}} f_X(x)\log\left[ \pz\left(\frac{x - \hat{x}_{j}}{\sa{X - \hat{X}}}\right)\right] \, dx.
 \end{equation*}
 
Therefore, the Left-Hand Side (LHS) of Equation~\eqref{eq:quancons} becomes bigger than $h(\tilde{Z}_{\alpha})$ which necessarily implies that $\sat{X - \hat{X}} > \sa{X - \hat{X}}$ in order to maintain equality in~\eqref{eq:quancons} by virtue of Property~\ref{prop:stableprop}-5 in the appendix. This proves that property (i) is a necessary condition of optimality.

When it comes to property (ii), there is no equivalent condition for strength-optimal quantizers. This is due to the fact that the strength is not ``decomposable" over the various quantization regions.

\subsection{Numerical Quantizer Design}

To design an optimal quantizer for $X$, we use a Lloyd-Max~\cite{LLoydMax} like algorithm where we iterate between optimizing given the representation points and optimizing given the quantization regions:
\begin{itemize}[label = $\longrightarrow$]
    \item Step 1: We randomly initialize $\left\lfloor \frac{M}{2} \right\rfloor$ representation points on the positive axis (in addition to a point at $0$ if $M$ is odd), and their symmetric negative counterparts.
    \item Step 2: Given the representation points, update the quantization regions according to the midpoint rule.
    \item Step 3: Given the quantization regions, we update the representation points $\{\hat{x}_0,\cdots,\hat{x}_{M-1}\}$ by minimizing $\sa{X - \hat{X}}$ such that Equation~\eqref{eq:quancons} is satisfied.
    \item Step 4: Repeat Steps 2 and 3 until the improvement in $\sa{X - \hat{X}}$ between one iteration and the next is small enough. 
\end{itemize}

\begin{algorithm}[!ht]
   \caption{Quantizer - Design}
   \label{alg:1}
    \begin{algorithmic}[1]
   \State {\bfseries Input:} n (even), $\alpha$, \text{tol}
   \Function{Power}{$S, \hat{\set{X}}$}
   \State \Return $S$
   \EndFunction
   \Function{Constraint}{$S, \hat{\mathcal{X}}, \mathcal{R}$}
   \State Compute $\displaystyle I_i = - \int_{r_i}^{r_{i+1}} f_X(x) \log \left[ f_{\tilde{z}_\alpha}\left(\frac{x - \hat{x}_{i}}{S}\right)\right] dx$ 
   \State Compute $\displaystyle I = \sum_{i=1}^{\left\lfloor \frac{M}{2} \right\rfloor-2} I_i$
   \State \Return $2I \leq h(\tilde{Z}_\alpha)$
   \EndFunction 
    \Function{Minimize}{$S, \hat{\mathcal{X}}, \mathcal{R}$}
    \State $\text{Bounds} \gets \left( (0, e_1), (e_1, e_2), \dots, (e_{\left\lfloor \frac{M}{2} \right\rfloor - 1}, +\infty) \right)$
    \State $\text{options} \gets \text{setTolerance}$
    \State $f \gets \Call{Power}{S, \hat{\mathcal{X}}}$
  \State $c \gets \Call{Constraint}{S, \hat{\mathcal{X}}, \mathcal{R}}$
 \State \Return \Call{ARGMIN}{f, bounds = \texttt{Bounds}, constraints = $c$, method = \texttt{trust-constr}}
   \EndFunction
    \State Randomly select $\hat{\set{X}} = \left\{\hat{x}_0, \hat{x}_1, \cdots, \hat{x}_{\left\lfloor \frac{M}{2} \right\rfloor-1} \right\}$  
    \Repeat
    \State Set $r_j = \frac{\hat{x}_j + \hat{x}_{j+1}}{2}$. $\set{R} = \left\{r_0, r_1, \cdots, r_{\left\lfloor \frac{M}{2} \right\rfloor-2} \right\}$ 
    \State Set $\{\hat{x}_j\}$ =  \Call{Minimize}{$S, \hat{\set{X}}, \set{R}$}
    \State Compute \Call{Power}{}
    \Until{$\big| \text{Power}_{\text{old}} - \text{Power}_{\text{new}} \big| < \text{tol}$}
\end{algorithmic}
\end{algorithm}

The corresponding pseudo-code is shown in Algorithm~\ref{alg:1}. In our implementation, we used a built-in constrained minimization function \textproc{Minimize} (denoted \textproc{ARGMIN} in the algorithm) from Python with the \texttt{trust-constr} method. This function is used to return the location of the quantization points that minimize the strength of the error. We emphasize that convergence is guaranteed since $s_\alpha(X - \hat{X}) > 0$ decreases in each repetition of Steps 2 and 3. 

We use Algorithm~\ref{alg:1} to design a quantizer for  a centered Cauchy random variable $X \sim S(1,\gamma_X)$ with $M = 2, 3$, and $4$ quantization points and for $\gamma_X = 1, 2, \cdots, 10$. As expected, the strength of the error decreases when the number of quantization points increases, and increases with $\gamma_X$ for fixed $M$ as clearly inferred form Figure~\ref{fig:RD1}.

\begin{figure}[!t]
\centering
\subfloat[]{\includegraphics[width=0.489\textwidth,keepaspectratio]{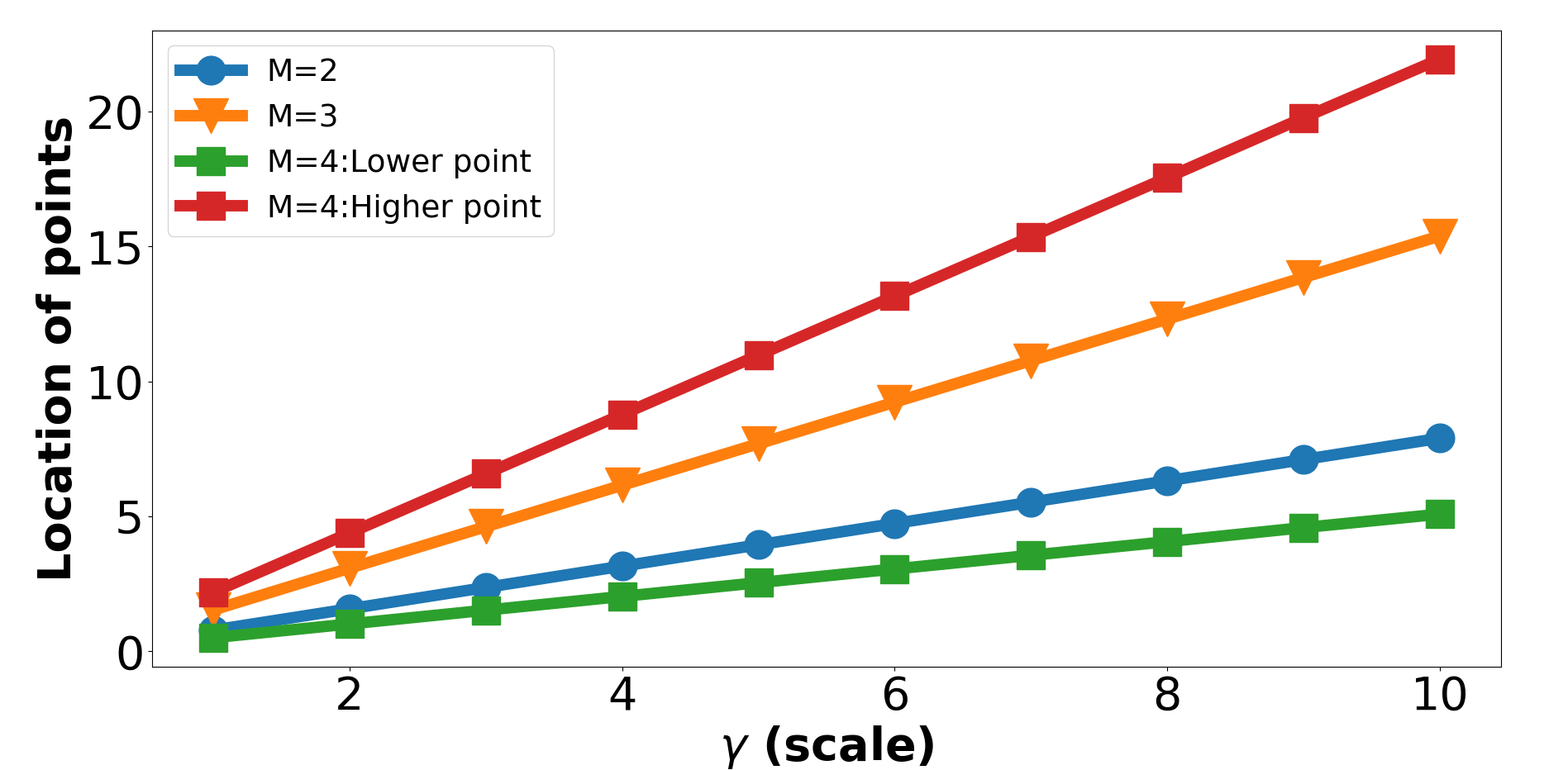}}%
\hfill
\subfloat[]{\includegraphics[width=0.489\textwidth, keepaspectratio]{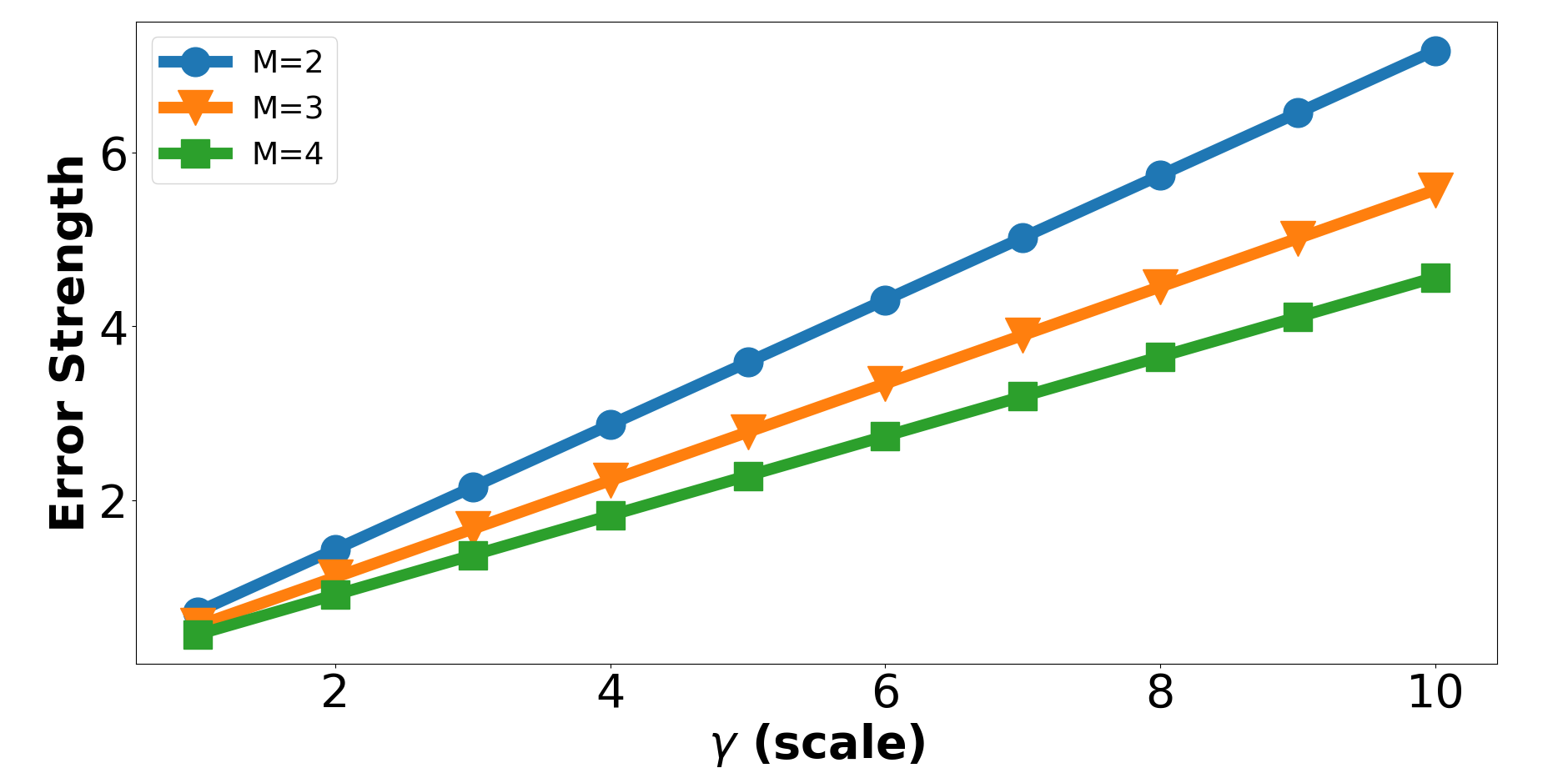}}%
\caption{Optimal $M = 2, 3,$ and $4$-points quantizers whenever $X \sim S(1,\gamma_X)$ for different values of $\gamma_X$. (a) Locations of the positive representation points; negative points are symmetric. (b) Resulting minimum strength.}
\label{fig:RD1}
\end{figure}
We also consider a source $X$ that follows a Student's $t$-distribution $X \sim \mathcal{T}_\nu(\gamma)$ with degrees of freedom $\nu$ and scale parameter $\gamma$ with PDF:
$$f_X(x) = \frac{1}{\gamma} f_{\mathcal{T}_\nu}\left(\frac{x}{\gamma}\right),$$ 
where:
\begin{equation*}
f_{T_\nu}\left(x\right) = \frac{\Gamma\left(\frac{\nu + 1}{2}\right)}{\sqrt{\pi} \Gamma\left(\frac{\nu}{2}\right)} \left(1 + \frac{x^2}{\nu}\right)^{- \frac{\nu + 1}{2}}. 
\end{equation*}
We determine an optimal 3-points quantizer for this source for values of $\nu \in \{0.7,1, 1.3, 1.9\}$. For each distribution, we plot in Figure~\ref{fig:RD2} --function of the scale $\gamma$, the location of the positive point (the other two being 0 and the negative counterpart of the positive point) along with the resulting error strength obtained by our quantizer.
\begin{figure}[!t]
\centering
\subfloat[]{\includegraphics[width=0.489\textwidth,keepaspectratio]{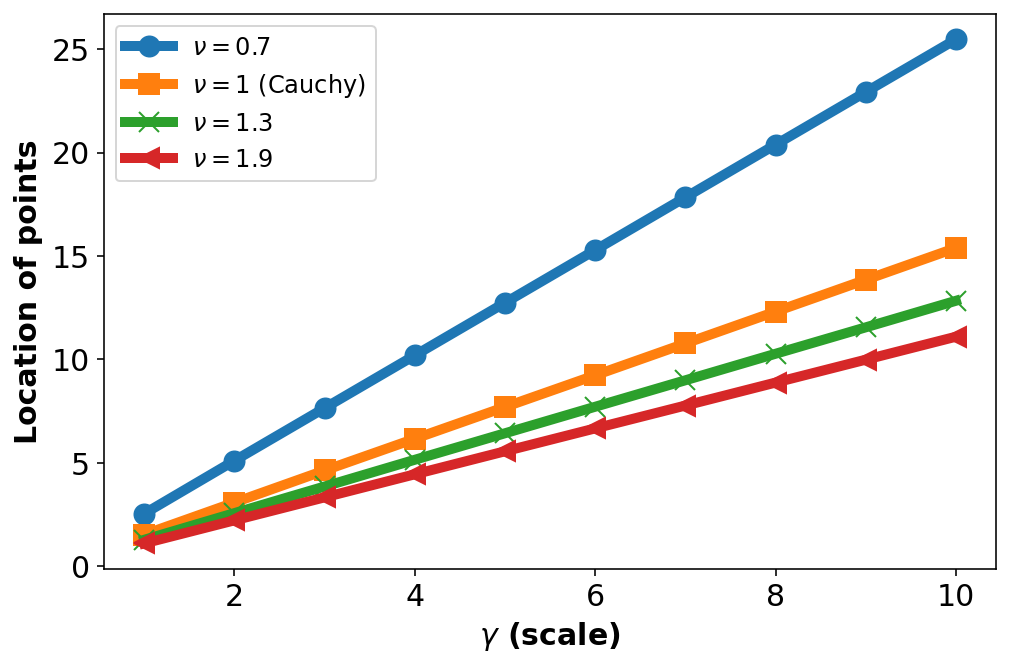}}%
\hfill
\subfloat[]{\includegraphics[width=0.489\textwidth, keepaspectratio]{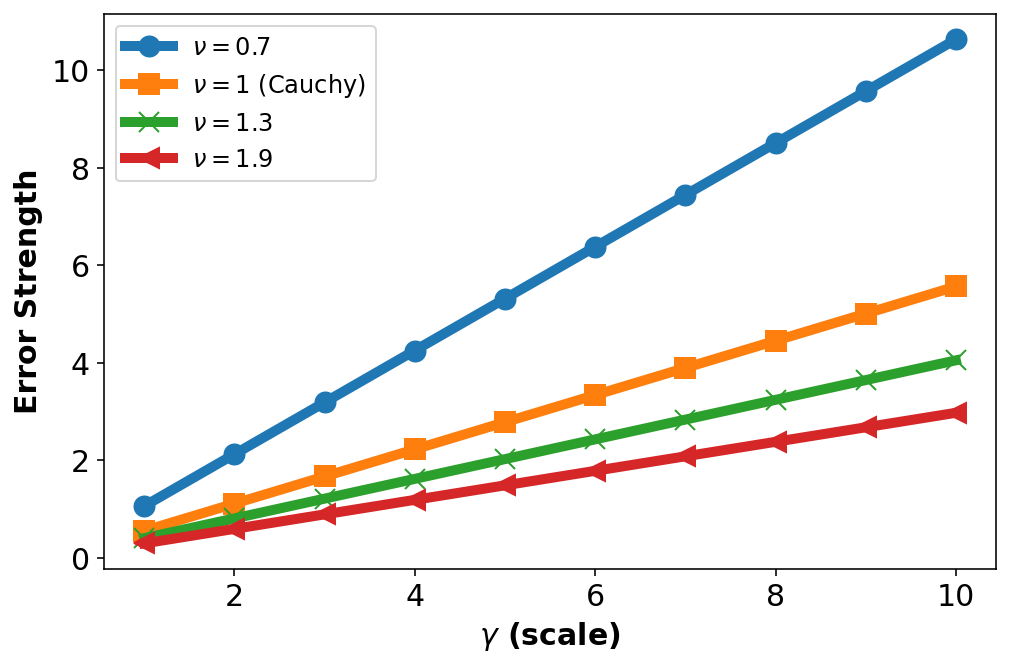}}%
\caption{Optimal $3$-points quantizers whenever $X \sim \mathcal{T}_{\nu}(\gamma)$ for $\nu = 0.7,1,1.3,1.9$. (a) Locations of the positive representation points; negative points are symmetric. (b) Resulting minimum strength.}
\label{fig:RD2}
\end{figure}

We observe that the location of the quantization points and the error strength are both linear with the scale parameter $\gamma_X$. Indeed, we have,
\begin{align}
\underset{\hat{x} }{\arg \min} \, s_\alpha \big( \gamma_X X - \hat{x} \big) & = \underset{ \hat{x}}{\arg \min} \left[ \gamma_X s_\alpha\left(X - \frac{\hat{x}}{\gamma_X}\right) \right] \label{eq:argmin_1} \\
& = \underset{\hat{x} }{\arg \min} \, s_\alpha\left(X - \frac{\hat{x}}{\gamma_X}\right)\label{eq:linear}\\
& = \gamma_X \ \underset{a }{\arg \min} \, s_\alpha(X - a), \label{eq:linear2}
\end{align}
where~\eqref{eq:argmin_1} is due to Property 2-3. Equation~\eqref{eq:linear2} is due to the fact that if $\hat{x}_0$ solves~\eqref{eq:linear} and $a_0$ solves~\eqref{eq:linear2} then $\hat{x}_0 = \gamma_X\,a_0$. Note that equation~\eqref{eq:linear} implies that if $X$ is scaled by $\gamma_X$, then the location of quantization points is also scaled by $\gamma_X$. The resulting minimum error strength scales with $\gamma_X$ as well:
\begin{equation*}
s_\alpha \big( \gamma_X X - \gamma_X\hat{x} \big) = \gamma_X s_\alpha(X- \hat{x}).
\end{equation*}


\subsection{Application}

Consider a communication channel where the input is subject to additive independent Cauchy noise. Under appropriate input constraints, the capacity of this channel was studied and derived in~\cite{FAF2014} where a Cauchy source was found to be optimal. Motivated by real-world applications where only finite-precision devices are available, we consider the {\em quantized\/} input and output of this Cauchy channel. More precisely, a Cauchy source is quantized to 10 points; transmitted over the additive Cauchy-noise channel; and the received output is quantized to 10 points.

Since the source is Cauchy and the output is heavy-tailed (due to the additive Cauchy noise component), we use the proposed strength-based quantizer presented above. Under quantization, the channel becomes a discrete memoryless channel and we quantify the quality of the quantization scheme through the value of mutual information achieved when an input power constraint of 12.01 is imposed,
the power being computed as  $\sum_{\set{R}_i} \Pr(\set{R}_i) \, \hat{x}_i^2$, where the $\{\set{R}_i\}$'s and $\{\hat{x}_i\}$'s denote the quantization regions and representation points respectively. The probabilities of each region can be readily computed as the variable is Cauchy distributed. For comparison, we also consider quantizers optimized 
for $p^{\text{th}}$ absolute moments $\E{|X-\hat{X}|^p}$ (for several values of $p < 1$) and the Maximum Output Entropy quantizer (MOE). Note that the points obtained in all cases are rescaled such that their power matches 12.01. We show in Figure~\ref{fig:mutualinf} the mutual information of the quantized channel function of the scale of the additive Cauchy noise of the channel.
\begin{remark}
As its name indicates, the MOE quantizer is designed so that the quantization regions are equiprobable yielding a maximum entropy at the output of the quantizer. As for the representation points, they are taken midpoints of the regions, except for the boundary points, which are chosen to match the power constraint.
\end{remark}
As seen from the data, the strength-based quantizer outperforms the $p^{\text{th}}$ absolute moment quantizers and closely matches the MOE one, for all noise levels; we conclude that the proposed strength-based quantizer is hence more suited for such heavy-tailed data streams that alternative robust ones. We note that while the proposed quantizer follows from the general framework of lossy data compression, the MOE quantizer does not return specific representation points and is not tied to a specific ``fidelity" measure. We also point out that with a quantized input, the channel output is not Cauchy distributed and our quantizer at the output is not tuned to its statistics. One can consider quantizing only the output of the channel with a continuous Cauchy-distributed input. In that scenario the strength-based quantizer becomes more favorable than the MOE as shown in Figure~\ref{fig:relMI} with a modest relative improvement increasing with the noise scale.

\begin{figure}[!t]
\centering
\includegraphics[width=0.489\textwidth,height=0.315\textwidth]{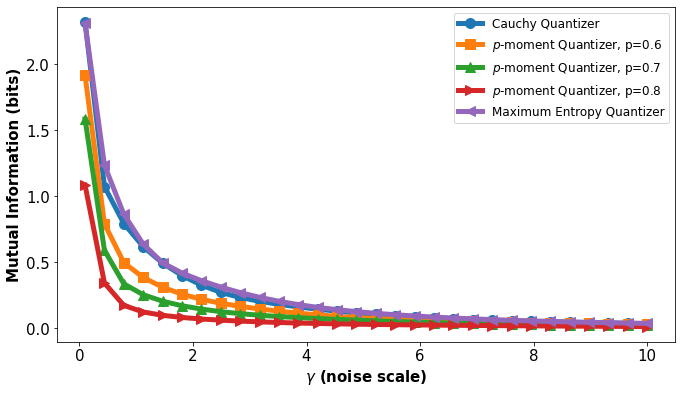}%
\caption{The mutual information $I(X,X+N)$ whenever $N \sim S(1,\gamma_N)$ for different values of $\gamma_N$.}
\label{fig:mutualinf}
\end{figure}

\begin{figure}[!t]
\centering
\includegraphics[width=0.489\textwidth,height=0.315\textwidth]{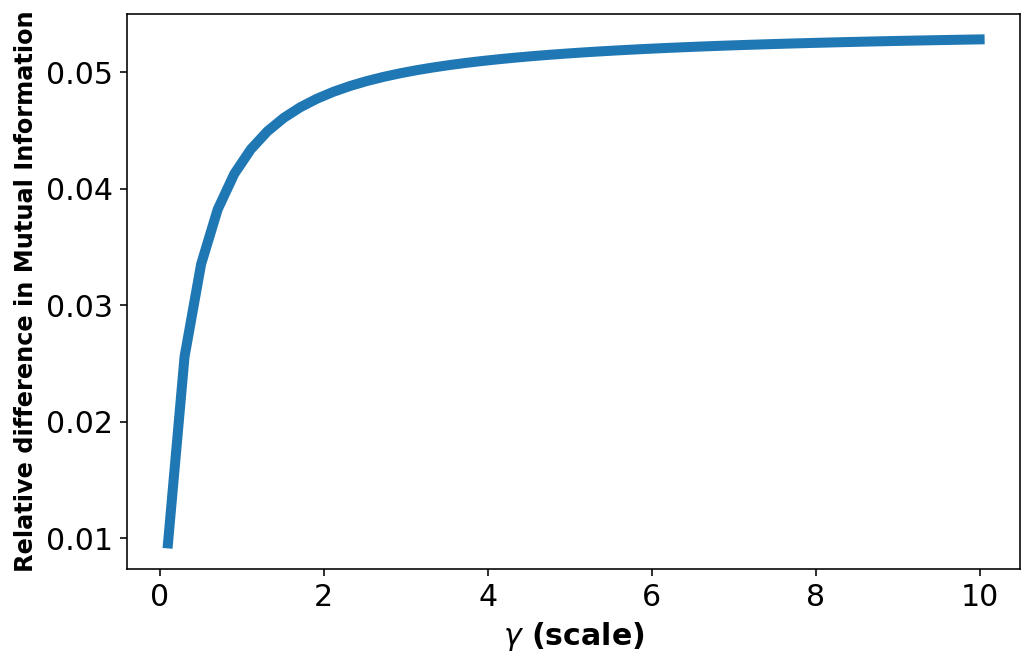}%
\caption{The relative improvement of mutual information of the strength-based quantizer over the MOE quantizer.}
\label{fig:relMI}
\end{figure}

\section{High-Rate Scalar Quantizers}
\label{sc:QuantHR}

For an "entropy-coded quantizer", one imposes a constraint on the entropy of the output of the quantizer instead of choosing a specific $M$. It is widely known that a uniform quantizer is asymptotically optimal~\cite{gallager2008principlesdig} for sources with a finite second moment when using the MSE distortion measure. Naturally, one is interested in whether uniform quantizers are asymptotically optimal in our framework as well. We answer this question first by analyzing the performance of uniform quantizers of a source $X$ when the distortion is measured by $s_\alpha(X - \hat{X})$ at a high rate, i.e. when the length of the quantization regions $|\set{R}_j| = \Delta_j = \Delta \to 0$, for all $j \in \Integers$. Then we prove that uniform quantizers are asymptotically optimal when adopting the strength measure.
\subsection{High Rate Uniform Scalar Quantizers}

We start by formally specifying the uniform quantizer; The representation points are uniformly placed at
\begin{equation*}
    \hat{x}_k = k \Delta, \qquad k \in \Integers,
\end{equation*}
and the quantization regions are defined by mid-points, all of equal width $\Delta$:
\begin{equation*}
    \set{R}_k = \left( k \Delta - \frac{\Delta}{2}, k \Delta +\frac{\Delta}{2} \right], \qquad k \in \Integers.
\end{equation*}

With a view towards the scenario of a ``high-rate regime", $\Delta$ is thought of to be small and we take 

Note that the points are asymptotically optimally located in terms of getting the minimum quantization error strength. Indeed, as $\Delta$ becomes very small, one can approximate the PDF over the $\{\set{R}_j\}$'s by its mean-values
\begin{equation}
\label{eq:probapprox}
f_X(x) \approx  \Bar{f}_j \eqdef \frac{\displaystyle \int_{\set{R}_j} f_X(x) \, dx}{\Delta} = \frac{p_j}{\Delta} \qquad x \in \set{R}_j,
\end{equation}
and
\begin{align}
    h(\tilde{Z}_{\alpha}) = \, & -\E{\log \pz \left(\frac{X-\Xhd}{s_\alpha(X - \Xhd)}\right)}  \\
    = \, & - \sum_{j} \int_{\set{R}_j} f_X(x)\log \pz \left(\frac{x-\hat{x}_j}{s_\alpha(X - \Xhd)}\right) dx \notag \\
    \approx \, & - \sum_{j} \int_{\set{R}_j} \frac{p_j}{\Delta}\log \pz \left(\frac{x-\hat{x}_j}{s_\alpha(X - \Xhd)}\right) dx. \label{eq:UQ_1}
\end{align}
Examining any of the individual terms inside the sum:
\begin{equation}
     - \frac{1}{\Delta} \int_{\set{R}_j}^{} \log \pz \left(\frac{x-\hat{x}_j}{s_\alpha(X - \Xhd)}\right) dx 
     = - \frac{1}{\Delta} \int_{-\frac{\Delta}{2} +\delta}^{\frac{\Delta}{2} + \delta} \log \pz \left(\frac{u}{s_\alpha(X - \Xhd)}\right) du \label{eq:minimsym}
\end{equation}
where $-\frac{\Delta}{2} \leq \delta \leq \frac{\Delta}{2}$. Due to the symmetry of $- \log \pz (x)$ and due to the fact that it is increasing in $|x|$ (Property~\ref{prop}-3), Equation~\eqref{eq:minimsym} is minimized when $\delta = 0$. In summary, for a uniform quantizer in the high-rate regime, the RHS of Equation~\eqref{eq:UQ_1} is minimal whenever the quantization points are middle points of the regions, which in turn leads to minimizing $s_\alpha(X - \Xhd)$ by virtue of Property~\ref{prop:stableprop}-5.

Next, we evaluate the performance of this quantizer. More precisely, we find the error-strength $s_\alpha(X-\hat{X})$ as a function of the uniform length of every quantization interval $|\set{R}_j| = \Delta$, for all $j \in \Integers$. 

We present the main result in the form of Theorem~\ref{th:unif}.
\begin{theorem}
\label{th:unif}
    Let $X$ be a random variable such that $s_{\alpha}(X)$ exists.
Let $\Xhd$ be the quantized version of $X$ with uniform quantization regions of length $\Delta$:
\begin{align*}
    x \in \left( k \Delta - \frac{\Delta}{2}, k \Delta +\frac{\Delta}{2} \right] \longrightarrow \hat{x}_{k} = k \Delta, \quad k \in \Integers.
\end{align*}

Then, as $\Delta \rightarrow 0$,
\begin{equation}
\label{eq:solva}
\frac{s_\alpha(X-\Xhd)}{\Delta} 
\xrightarrow[\Delta \rightarrow 0]{}
s_{\alpha}(U),
\end{equation}
where $U$ is the uniform random variable $\set{U}\left( -\frac{1}{2},\frac{1}{2} \right)$.
\end{theorem}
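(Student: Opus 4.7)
The approach is to rewrite $s_\alpha(X-\Xhd)$ as the unique solution of an implicit equation and then show that, after rescaling by $\Delta$, this equation converges as $\Delta\to 0$ to the implicit equation that defines $s_\alpha(U)$. By Property~\ref{prop:stableprop}-1, $s^\star_\Delta \eqdef s_\alpha(X-\Xhd)$ is the unique positive number satisfying
\[
- \sum_{k \in \Integers} \int_{\set{R}_k} f_X(x)\, \log \pz\!\left(\frac{x-k\Delta}{s^\star_\Delta}\right) dx \;=\; h(\atilde{Z}).
\]
I would set $s_\Delta \eqdef s^\star_\Delta/\Delta$, apply the high-rate approximation~\eqref{eq:probapprox}, and perform the substitution $u=(x-k\Delta)/\Delta$ inside each summand. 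The defining equation collapses to
\[
- \sum_{k\in\Integers} p_k \int_{-1/2}^{1/2} \log\pz\!\left(\frac{u}{s_\Delta}\right) du \;+\; o(1) \;=\; h(\atilde{Z}),
\]
and since $\sum_{k}p_k=1$ this reads $-\E{\log\pz(U/s_\Delta)} + o(1) = h(\atilde{Z})$, with $U\sim\set{U}(-1/2,1/2)$.

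Passing to the limit $\Delta\to 0$, any accumulation point $s^{\star}$ of $\{s_\Delta\}$ must then satisfy $-\E{\log\pz(U/s^{\star})} = h(\atilde{Z})$, which by Definition~\ref{def:str} combined with Property~\ref{prop:stableprop}-1 is precisely the equation defining $s_\alpha(U)$. To promote this from a subsequential statement to full convergence I would invoke two ingredients: (i)~the map $s\mapsto -\E{\log\pz(U/s)}$ is continuous and strictly decreasing on $(0,\infty)$ since $\pz(\cdot)$ is strictly decreasing in $|\,\cdot\,|$ by Property~\ref{prop}-3, so $s_\alpha(U)$ is the \emph{unique} root of the limiting equation; and (ii)~$\{s_\Delta\}$ lies in a fixed compact subinterval of $(0,\infty)$, since $s_\Delta\to 0$ would force the left-hand side to $+\infty$ while $s_\Delta\to\infty$ would force it to $-\log\pz(0)<h(\atilde{Z})$ (the strict inequality holding because $\pz(\atilde{Z})<\pz(0)$ with positive probability). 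A standard subsequence argument then yields $s_\Delta\to s_\alpha(U)$ as claimed.

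The principal obstacle is making the ``$o(1)$'' term rigorous, that is, bounding the error produced by the high-rate approximation~\eqref{eq:probapprox} uniformly in $k\in\Integers$ and uniformly in $s$ over a compact subset of $(0,\infty)$. Because the integrand $-\log\pz(u/s)$ is continuous and bounded on $u\in[-1/2,1/2]$ for each such $s$, the weighted sum over $k$ is essentially a Riemann-integral approximation of $\int f_X(x)\,dx=1$; under mild regularity on $f_X$ (continuity and adequate tail decay, both natural for the heavy-tailed sources of interest) this reduces to routine analysis. Beyond this technical step, the argument is a short implicit-function-type argument whose two ingredients---the strict monotonicity in Property~\ref{prop}-3 and the uniqueness in Property~\ref{prop:stableprop}-1---are already established earlier in the paper.
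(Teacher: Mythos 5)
Your proposal follows essentially the same route as the paper's proof: both start from the defining equation $-\E{\log \pz\left(\frac{X-\Xhd}{s_\alpha(X-\Xhd)}\right)} = h(\tilde{Z}_\alpha)$, invoke the high-rate approximation~\eqref{eq:probapprox}, rescale inside each quantization cell so the sum over cells collapses (via $\sum_k p_k = 1$) to $-\int_{-1/2}^{1/2}\log\pz\!\left(u\,\Delta/s\right)du = h(\tilde{Z}_\alpha)$, and identify the solution with $s_\alpha(U)$ by Definition~\ref{def:str}. Your added care in the limit passage (compactness of $\{s_\Delta\}$, uniqueness of the root by monotonicity, subsequence argument) and your explicit flagging of the uniform control of the approximation error go somewhat beyond the paper, which treats the high-rate approximation as exact, but the underlying argument is the same.
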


 Before we proceed with the proof, we note that the existence of $s_{\alpha}(X)$ is guaranteed whenever the conditions of Property~\ref{prop:stableprop} in Appendix~\ref{app:prop} are satisfied.

\begin{proof}
    The performance of this uniform quantizer is measured by $s_\alpha(X - \Xhd)$ and we recall that:
\begin{align}
    -\E{\log \pz \left(\frac{X-\Xhd}{s_\alpha(X - \Xhd)}\right)} = h(\tilde{Z}_{\alpha}). \label{eq:powconst}
\end{align}
As ``the rate" of the quantizer increases, $\Delta$ decreases, and the approximation~\eqref{eq:probapprox} holds. The LHS of Equation~\eqref{eq:powconst} can be approximated as in~\eqref{eq:UQ_1} by: 
\begin{align}
     & - \sum_{j} \int_{\set{R}_j} \frac{p_j}{\Delta}\log \pz \left(\frac{x-\hat{x}_j}{s_\alpha(X - \Xhd)}\right) dx \nonumber \\
    = \, & - \frac{1}{\Delta} \sum_{j} p_j \int_{\hat{x}_j - \frac{\Delta}{2}}^{\hat{x}_j + \frac{\Delta}{2}} \log \pz \left(\frac{x-\hat{x}_j}{s_\alpha(X - \Xhd)}\right) dx  \nonumber\\
    = \, & - \frac{s}{\Delta} \int_{-\frac{\Delta}{2s}}^{\frac{\Delta}{2s}}  \log \pz \left(t\right) dt\label{eq:uniformquant2}\\
    = \, & -\int_{-\frac{1}{2}}^{\frac{1}{2}} \log f_{\tilde{Z}_{\alpha}}\left(\frac{u}{a}\right)\,du \label{eq:finalappox},
\end{align}
where $s \eqdef s_\alpha(X - \Xhd)$, $a \eqdef \frac{s}{\Delta}$ and where we used the change of variable $u = a t$. To write Equation~\eqref{eq:uniformquant2}, we used the change of variable $t = \frac{x - \hat{x}_j}{s}$.  

The approximation in~(\ref{eq:finalappox}) implies that the performance of the uniform quantizer in the high-rate regime is governed by solving for $a = \frac{s}{\Delta} = \frac{s_\alpha(X - \Xhd)}{\Delta}$ the equation 
\begin{equation*}
-\int_{-\frac{1}{2}}^{\frac{1}{2}} \log f_{\tilde{Z}_{\alpha}}\left(\frac{u}{a}\right)\,du = h(\tilde{Z}_{\alpha}),
\end{equation*}
which in turn implies by the definition of $s_{\alpha}(\cdot)$ that 
\begin{equation*}
a = \frac{s_\alpha(X - \Xhd)}{\Delta} = s_\alpha(U),
\end{equation*}
where $U \sim \set{U} \left( -\frac{1}{2},\frac{1}{2} \right)$. 
\end{proof}

Hence the quantization error strength for the optimal quantizer is asymptotically linear in $\Delta$ with slope being equal to $s_\alpha(U)$. Equation~(\ref{eq:solva}) gives a rather ``nice" generalization of the well-known result regarding the performance of uniform quantizers in the high-rate regime under the MSE distortion measure. In fact, considering the case $\alpha = 2$, $s_{\alpha}(\cdot)$ boils down to the standard deviation as seen in~\eqref{eq:stren2} and $s_{2}(U) = \sqrt{ \text{E}[U^2] } = \sqrt{\frac{1}{12}}$, 
which recovers the well-known result for the performance of the uniform quantizer under an MSE distortion~\cite{gallager2008principlesdig}. 

\subsection{Uniform Scalar Quantizers are Optimal in the High Rate Regime}
In this section , we study the performance of non-uniform quantizers under the strength distortion in the high-rate regime. Following a similar sequence of arguments to those made in~\cite{gallager2008principlesdig}, we show that the result obtained in Theorem~\ref{th:unif} for uniform high-rate quantizers provides an approximate lower-bound on the strength of the quantization error for any non-uniform scalar quantizer. 
A similar analysis was carried out in~\cite{GishAndPierce} for difference-based distortion measures, where it was shown that uniform quantizers are optimal in the high rate regime. A key difference, in our setup, is that the strength, which measures the distortion, is defined as an implicit solution to equation~\eqref{eq:powdef} and thus does not fall under the category of difference-based measures considered in~\cite{GishAndPierce}. This new formulation of the distortion measure warrants a separate analysis.

Within the framework of designing entropy-coded quantizers, we impose a constraint on the average rate of the quantizer's output $V$ which is captured by its entropy $H(V)$. We consider the quantization regions $\{ \set{R}_j \}_j$ to be of variable widths $\{ \Delta_j \}_j$ which are not restricted, and we denote by $\{p_j\}_{j \in \Integers}$, the probability mass function of $V$. 

At high-rates, we approximate for $x \in \set{R}_j$
\begin{equation}
\label{eq:apprononuni}
f_X(x) \approx \Bar{f}_j \eqdef \frac{\int_{\set{R}_j} f_X(x)}{\Delta_j} = \frac{p_j}{\Delta_j}, \qquad x \in \set{R}_j.
\end{equation}
We have
\begin{align}
    H(V) & = -\sum_{j}^{} p_j\log p_j
         = -\sum_{j}^{} p_j\log(\Bar{f}_j\Delta_j) \notag \\
          &= -\sum_{j}^{} \left(\int_{\set{R}_j} f_X(x) \, dx\right)\log(\Bar{f}_j\Delta_j) \notag \\
          &= -\sum_{j}^{} \int_{\set{R}_j} f_X(x) \log(\Bar{f}_j\Delta_j)  \, dx \notag \\
          &\approx -\sum_{j}^{} \int_{\set{R}_j} f_X(x) \log(f_X(x) \Delta_j)  \, dx \label{eq:approxdelta1}\\
         &= - \! \int_{-\infty}^{\infty} \!\! f_X(x) \log f_X(x)\,dx  - \! \int_{-\infty}^{\infty} \!\! f_X(x) \log \Delta(x)\,dx \notag\\
         &= h(X) - \int_{-\infty}^{\infty} f_X(x) \log \Delta(x)\,dx \label{eq:entropyConstraint},
\end{align}
where for any $x \in \Reals$, $\Delta(x) = \Delta_j$ if $x \in \set{R}_j$. The approximation in Equation~\eqref{eq:approxdelta1} holds true since $\Bar{f}_j \approx f_X(x)$ for $x \in \set{R}_j$. Since over every region, the PDF is approximately flat, the representation points are optimally placed in the middle of the regions as argued in Section~\ref{sc:QuantHR}.

Using the approximation~\eqref{eq:apprononuni} for the LHS of Equation~\eqref{eq:powconst} as in the uniform case in Section~\ref{sc:QuantHR}, we get
\begin{align}
    h(\tilde{Z}_\alpha) \, & \approx - \sum_{j}^{} \frac{p_j s}{\Delta_j} \int_{-\frac{\Delta_j}{2s}}^{\frac{\Delta_j}{2s}}  \log f_{\tilde{Z}_{\alpha}} \left(x\right) dx  \notag \\
    & = - \sum_{j}^{} p_j  \int_{-\frac{1}{2}}^{\frac{1}{2}}  \log f_{\tilde{Z}_{\alpha}} \! \left(\frac{u \Delta_j}{s}\right) \, du \\
    & = - \sum_{j} \int_{\set{R}_j} f_X(x) \left[ \int_{-\frac{1}{2}}^{\frac{1}{2}}  \log f_{\tilde{Z}_{\alpha}} \! \left(\frac{u \Delta(x)}{s}\right) du\right] dx \notag\\
    & =  - \int_{-\infty}^{\infty} \! f_X(x) \left[ \int_{-\frac{1}{2}}^{\frac{1}{2}}  \log f_{\tilde{Z}_{\alpha}} \!\! \left(\frac{u \Delta(x)}{s}\right) du\right] dx. \label{eq:consnonunif}
\end{align}
Our objective is to find the $\Delta_j$'s that minimize $s$ where $s= s_\alpha(X - \hat{X}_{\Delta(x)})$ is the strength of the quantization error. To this end, we formulate the problem as a constrained optimization problem:
\begin{equation}
    \underset{
    \begin{aligned}
        \scriptstyle & \qquad \scriptstyle \text{Equation}~\eqref{eq:consnonunif}\\
        \scriptstyle h(X) - & \scriptstyle \int_{-\infty}^{\infty} f_X(x) \log \Delta(x)\,dx  & \hspace{-6pt} \leq  \scriptstyle \,\, c
    \end{aligned}
}{\min}s_\alpha(X - \hat{X}_{\Delta(x)}),
\label{eq:Problem}
\end{equation}
for some $c >0$. We construct the Lagrangian:
\begin{align}
&L(s,\Delta(x),\lambda_1,\lambda_2) \notag\\
   &= 
    s + \lambda_1\left[-\int_{-\infty}^{\infty} f_X(x) \left[ \int_{-\frac{1}{2}}^{\frac{1}{2}}  \log f_{\tilde{Z}_{\alpha}} \! \left(\frac{u \Delta(x)}{s}\right) du\right] dx  - h(\tilde{Z}_\alpha)\right] + \lambda_2\left[h(X) - \int_{-\infty}^{\infty} f_X(x) \log \Delta(x)\,dx - c \right] \notag \\ 
    &= s - \lambda_1  h(\tilde{Z}_\alpha) + \lambda_2 (h(X) - c)  
     - \int_{-\infty}^{\infty} f_X(x) \bigg[ \lambda_1\int_{-\frac{1}{2}}^{\frac{1}{2}}  \log f_{\tilde{Z}_{\alpha}}\left(\frac{u \Delta(x)}{s}\right) \, du 
     + \lambda_2 \log \Delta(x)\bigg] \, dx, 
    \label{eq:kkt}
\end{align}
for some $\lambda_1 \in \mathbb{R}$, $\lambda_2 \geq 0$. 
We observe from the inner integral in equation~\eqref{eq:kkt} that if a minimizer exists, then it can be selected in such a way \( \Delta(x) \) is constant with respect to \( x \): \( \Delta(x) = \Delta \) for all \( x \). As a result, equation~\eqref{eq:consnonunif} reduces to
\[
h(\tilde{Z}_\alpha) = - \int_{-\frac{1}{2}}^{\frac{1}{2}} \log f_{\tilde{Z}_\alpha} \left( \frac{u \Delta}{s} \right) \, du,
\]
which is an increasing function of \( \Delta \). Since our goal is to minimize \( s_\alpha(\cdot) \), and since $h(X) - \log(\Delta)$ is decreasing with $\Delta$, we choose
\[
\Delta^* = e^{c - h(X)}
\]
as prescribed by equation~\eqref{eq:entropyConstraint} . A solution to the optimization problem is then given by
\[
h(\tilde{Z}_\alpha) = - \int_{-\frac{1}{2}}^{\frac{1}{2}} \log f_{\tilde{Z}_\alpha} \left( \frac{u}{\frac{s}{\Delta^*}} \right) \, du.
\]

We conclude that
\[
s_\alpha^*(X - \hat{X}_{\Delta(x)}) = s_\alpha(X - \hat{X}_{\Delta^*}) = s_\alpha(U) \cdot \Delta^*,
\]
where \( U \sim \mathcal{U}(-\tfrac{1}{2}, \tfrac{1}{2}) \).

\subsection{Numerical Analysis}
\label{sec:CB}

Strength-based quantizers provide a generalization of the MSE quantizer for heavy-tailed sources. Moreover, the analysis conducted in the previous section shows that the performance of uniform quantizers converges asymptotically to the optimal performance at high rate. This family of quantizers is defined using the PDF of a standard $\alpha$-stable variable, namely $\tilde{Z}_{\alpha} \sim \mathcal{S}\left(\alpha,\left(\frac{1}{\alpha}\right)^{\frac{1}{\alpha}}\right)$. One possible inconvenience is that, except in two special cases, the PDFs $f_{\tilde{Z}_{\alpha}}(\cdot)$ do not have closed-form expressions which might lead to them being less-attractive to use. In this section, we focus on the Cauchy-Strength (CS) quantizer of a source $X$, a strength-based quantizer that uses the PDF of a standard Cauchy variable $\tilde{Z}_{1} \sim \mathcal{S}(1,1)$,  
\begin{equation}
f_{\tilde{Z}_{1}}(x) = \frac{1}{\pi} \frac{1}{1 + x^2},
\label{eq:PDFCauchy}
\end{equation}
for which the strength $s_1(X)$ is given as the unique constant such that (see Equation~\eqref{eq:stren1} for $d = 1$) 
\begin{equation}
\label{eq:s1}
\text{E}\left[\log\left(1 + \frac{X^2}{s_1(X)^2}\right)\right] = \log 4.
\end{equation}

The CS quantizer provides an explicit extension to the MSE quantizer for heavy-tailed infinite second-moment sources $X$. Being a member of the family of strength-based quantizers, all previously presented results are applicable to this specific case. Namely, one can apply the results of Theorem~\ref{th:unif} for the high-rate uniform quantizer. For the CS quantizer, 
$s_{1}(U)$ for $U \sim \mathcal{U}(-\frac{1}{2},\frac{1}{2})$ can be found using~\eqref{eq:s1}
\begin{equation*}
    \int_{-\frac{1}{2}}^{\frac{1}{2}} \log\left(1 + \frac{u^2}{s_1(U)^2}\right) \,du = \log 4,
\end{equation*}
which through integration by parts simplifies to
\begin{equation*}
     \ln\left(\frac{1}{4s_1(U)^2} + 1\right) - 2 
    + 4s_1(U)\arctan\left(\frac{1}{2s_1(U)}\right) = \ln(4),
\end{equation*}
with a solution $s_1(U) = 0.1359$. This means that the rate of convergence of the quantization error strength to $0$ --with $\Delta$,  is approximately twice as fast for a Gaussian-like (thin-tailed) source compared to a heavy-tailed one $\left( \frac{s_{2}(U)}{s_{1}(U)} \approx 2.124 \right)$. Finally, we note that, in the general case, $s_{\alpha}(U)$ can be numerically evaluated for all values of $0 < \alpha \leq 2$ as depicted in Figure~\ref{fig:Alpha-Power}. 

From a practical point of view, we compare the performance of the uniform quantizer in the finite regime with respect to the ultimate lower bound provided by $R(D)$ or equivalently $D(R)$ in Theorem~\ref{th:extstable} for two specific sources; the heavy-tailed Cauchy and the thin-tailed Gaussian. The analysis is conducted as a function of the number of quantization levels $M$ where we plot in Figures~\ref{fig:UniformCauchy} and~\ref{fig:UniformGaussian} the difference between $[s_\alpha(X - \hat{X})- D(R(M))]$ where $s_\alpha(X - \hat{X})$ is numerically evaluated for the uniform quantizer. Furthermore, we plot on the same figures the same difference for $s_\alpha(X - \hat{X})$ as given by Equation~\eqref{eq:solva}. Three important facts are observed in Figures~\ref{fig:UniformCauchy} and~\ref{fig:UniformGaussian}: 
\begin{itemize}
\item First, the plots are decreasing towards zero with the number of points, a fact which indicates the asymptotic optimality of the uniform quantizer in the high-rate regime for Gaussian and heavy-tailed Cauchy.
\item Second, the speed of convergence is different between the two cases. The number of required representation bits to achieve the same gap from $D(R)$ is one order of magnitude higher for the Cauchy compared to the Gaussian with the same scale.  
\item Finally, the values of $M$ at which approximation~\eqref{eq:solva} starts to become valid are much smaller for a thin-tailed distribution such as the Gaussian compared to heavy tailed ones like the Cauchy.
\end{itemize}
\begin{figure}[!ht]
    \begin{center}
    \includegraphics[width=\linewidth]{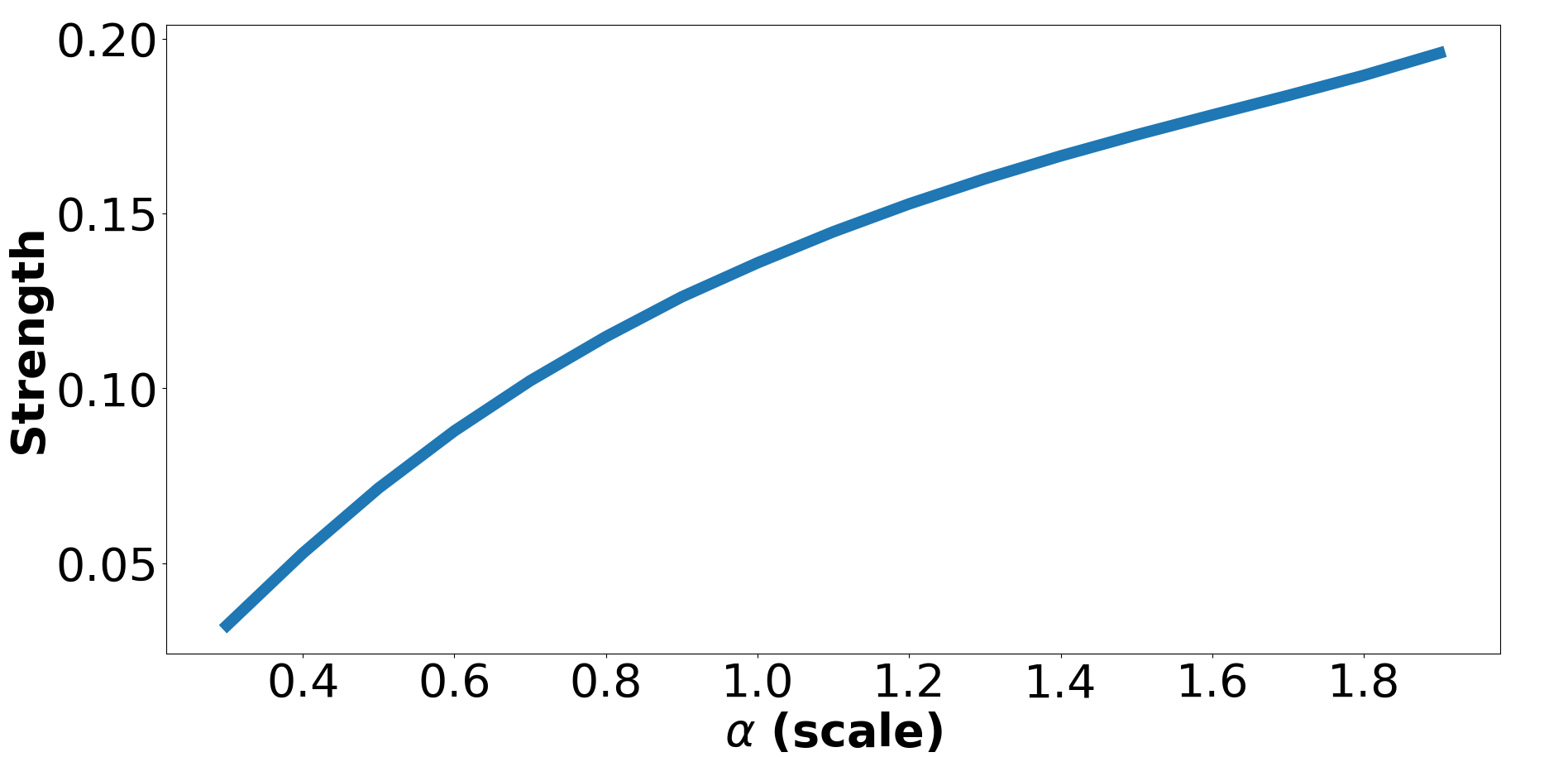}
\caption{$s_\alpha(U)$ for different values of $\alpha$ and $U \sim \set{U} \left( -\frac{1}{2},\frac{1}{2} \right)$.}
    \label{fig:Alpha-Power}
    \end{center}
\end{figure}
\begin{figure}[!ht]
    \begin{center}
    \includegraphics[width=\linewidth]{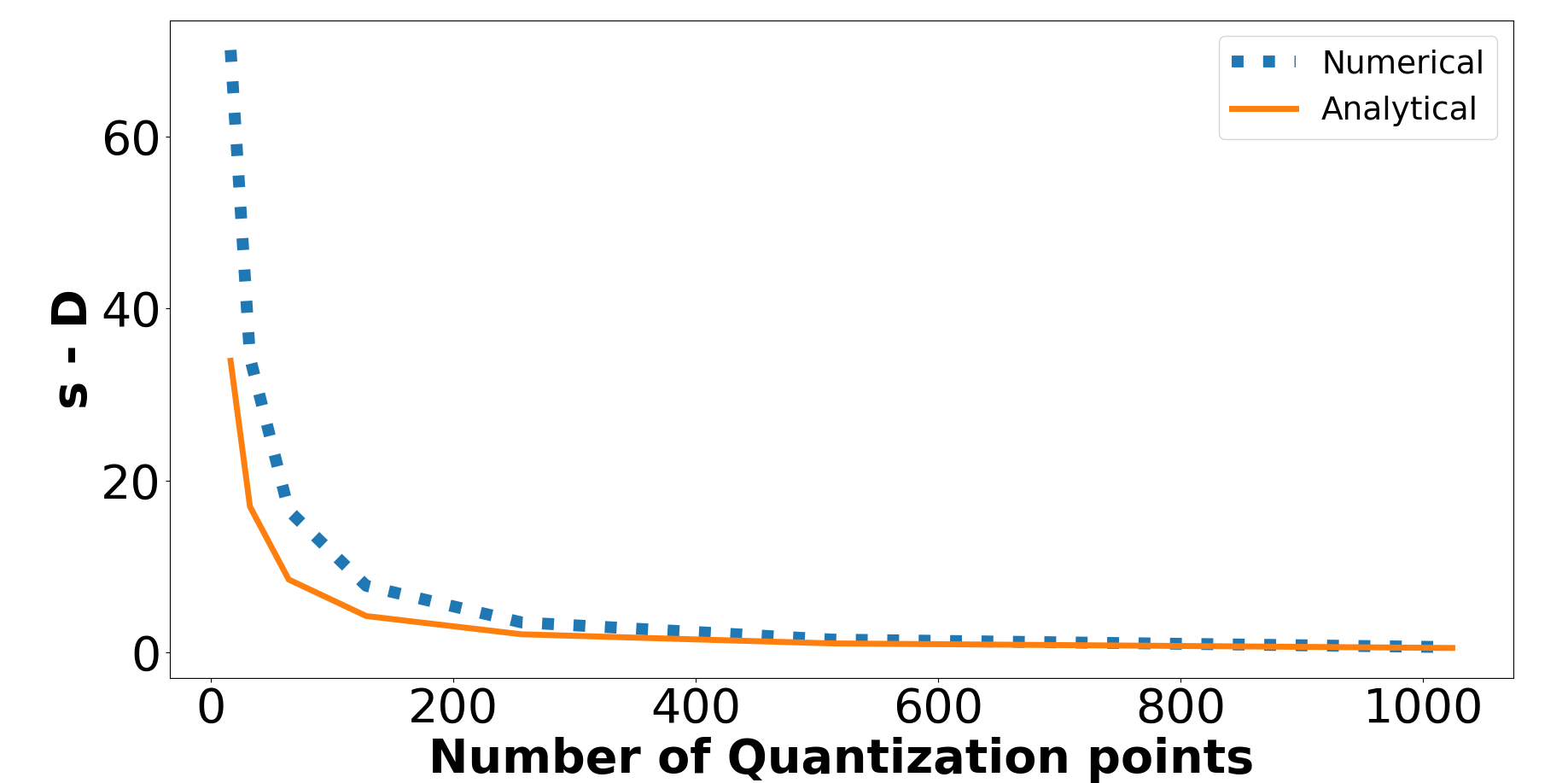}
\caption{$[s_\alpha(X - \hat{X})- D(R)]$ as a function of the uniform quantizer's size $M$ for $X \sim \set{S}(1,1)$. Analytical refers to Equation~\eqref{eq:solva} for $\alpha = 1$.}
    \label{fig:UniformCauchy}
    \end{center}
\end{figure}
\begin{figure}[!ht]
    \begin{center}
    \includegraphics[width=\linewidth]{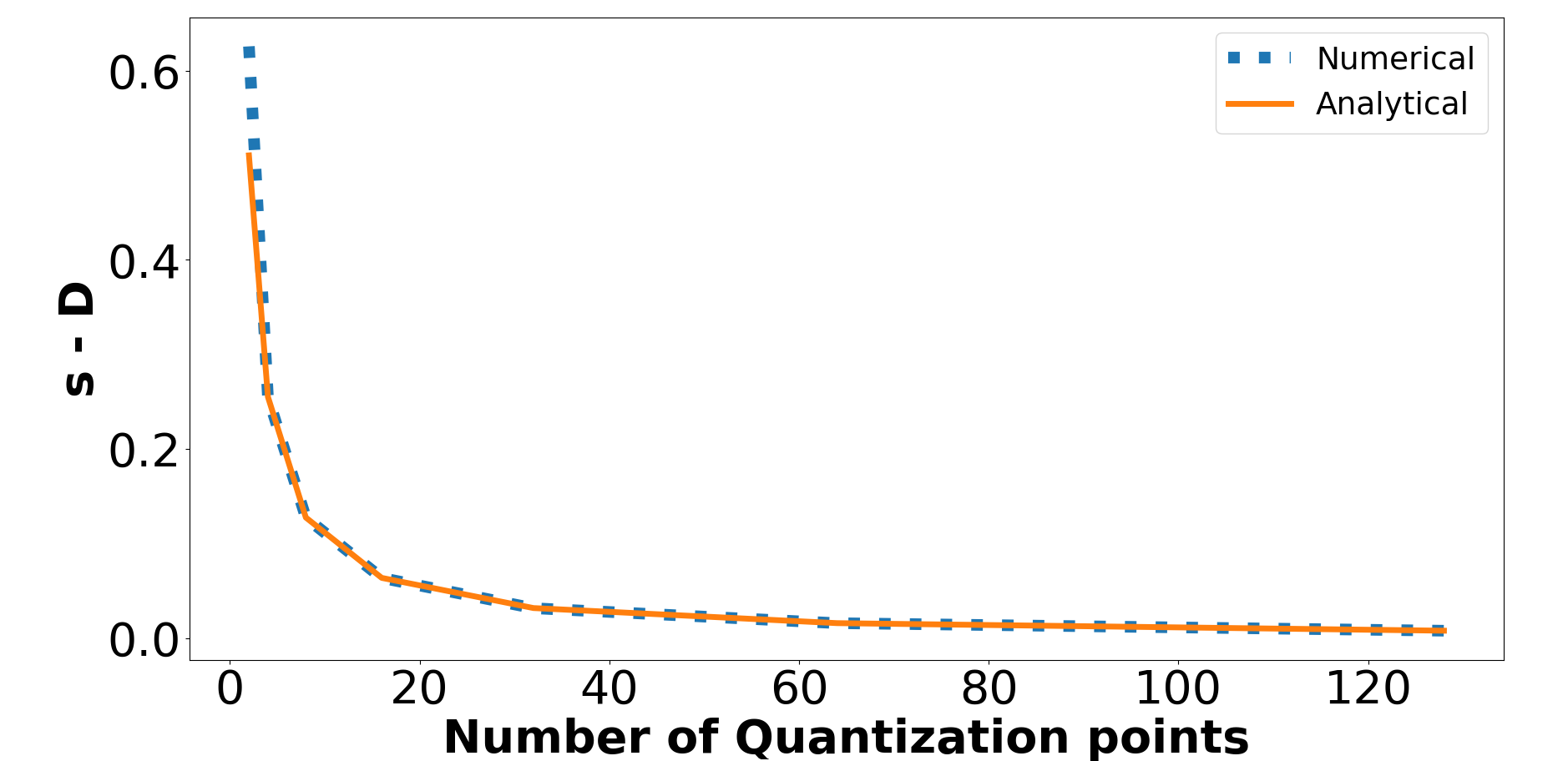}
\caption{$[s_\alpha(X - \hat{X})- D(R)]$ as a function of the uniform quantizer's size $M$ for $X$ a standard Gaussian source. Analytical refers to Equation~\eqref{eq:solva} for $\alpha = 2$.}
    \label{fig:UniformGaussian}
    \end{center}
\end{figure} 


\section{Conclusion}
\label{sc:Concl}

In this paper, we provided a new framework for performing lossy data compression of heavy-tailed sources with substantial implications in the vast areas of digital communications and signal processing. Under this framework, we adopted a {\em strength} measure on the error as a distortion and we derived the rate-distortion function of a stable source for both scalar and vector cases. Our approach finds application, for example, in the quantization problem of heavy-tailed sources, in the sense of specifying the best collection of representation points and quantization regions that minimize the {\em strength} of the quantization error. Moreover, we showed that uniform quantizers remain asymptotically optimal in the high-rate regime and that, in general, there is little advantage in considering non-uniform ones; a fact that is well-known for sources with finite second moment. A key difference is that, in the heavy-tailed case, one has to increase the number of bits of the quantizer to achieve the same distortion levels. This formally proves that standard analog-to-digital converters can still operate well for heavy-tailed sources, but with much higher distortion levels than one would get for Gaussian sources with the same scale. 


\bibliographystyle{IEEEtran}
\bibliography{references,bibliography}

@book{nolan2020univariate,
  title={Univariate Stable Distributions: Models for Heavy Tailed Data},
  author={Nolan, J.P.},
  isbn={9783030529154},
  series={Springer Series in Operations Research and Financial Engineering},
  url={https://books.google.com.lb/books?id=c7r8DwAAQBAJ},
  year={2020},
  publisher={Springer International Publishing}
}

@article{fahs2017information,
  title={Information measures, inequalities and performance bounds for parameter estimation in impulsive noise environments},
  author={Fahs, Jihad and Abou-Faycal, Ibrahim},
  journal={IEEE Transactions on Information Theory},
  volume={64},
  number={3},
  pages={1825--1844},
  year={2017},
  publisher={IEEE}
}

@book{Polyanskiy_Wu_2024,
place={Cambridge}, title={Information Theory: From Coding to Learning}, publisher={Cambridge University Press}, author={Polyanskiy, Yury and Wu, Yihong}, year={2024}}

@book{Cover2006,
  added-at = {2009-04-20T21:27:16.000+0200},
  at = {2008-03-31 06:17:47},
  author = {Cover, Thomas M. and Thomas, Joy A.},
  biburl = {https://www.bibsonomy.org/bibtex/22e9bfa879286689a14feb55b69d326c1/ywhuang},
  howpublished = {Hardcover},
  id = {1877660},
  interhash = {87ae368776946bf7a71ee476e81a2191},
  intrahash = {2e9bfa879286689a14feb55b69d326c1},
  isbn = {0471241954},
  keywords = {information-theory book},
  month = {July},
  priority = {0},
  publisher = {Wiley-Interscience},
  timestamp = {2009-04-20T21:27:16.000+0200},
  title = {Elements of Information Theory 2nd Edition (Wiley Series in Telecommunications and Signal Processing)},
  year = 2006
}

@article{verdu,
author = {Verdú, Sergio},
year = {2023},
month = {02},
pages = {346},
title = {{The Cauchy Distribution in Information Theory}},
volume = {25},
journal = {Entropy},
doi = {10.3390/e25020346}
}

@INPROCEEDINGS{FAF2014,
  author={Fahs, Jihad and Abou-Faycal, Ibrahim},
  booktitle={2014 IEEE International Symposium on Information Theory}, 
  title={{A Cauchy input achieves the capacity of a Cauchy channel under a logarithmic constraint}}, 
  year={2014},
  volume={},
  number={},
  pages={3077-3081},
  keywords={Signal to noise ratio;Channel capacity;Power measurement;Dispersion;Additives},
  doi={10.1109/ISIT.2014.6875400}}

@article{shannon1959coding,
  title={Coding theorems for a discrete source with a fidelity criterion},
  author={Shannon, Claude E and others},
  journal={IRE Nat. Conv. Rec},
  volume={4},
  number={142-163},
  pages={1},
  year={1959}
}

@ARTICLE{784467,
  author={Georgiou, P.G. and Tsakalides, P. and Kyriakakis, C.},
  journal={IEEE Transactions on Multimedia}, 
  title={Alpha-stable modeling of noise and robust time-delay estimation in the presence of impulsive noise}, 
  year={1999},
  volume={1},
  number={3},
  pages={291-301},
  keywords={Noise robustness;Acoustic noise;Delay effects;Delay estimation;Statistics;Noise measurement;Gaussian noise;Probability distribution;Gaussian distribution;1f noise},
  doi={10.1109/6046.784467}}

@article{he2014robust,
  title={Robust blind identification of room acoustic channels in symmetric alpha-stable distributed noise environments},
  author={He, Hongsen and Lu, Jing and Chen, Jingdong and Qiu, Xiaojun and Benesty, Jacob},
  journal={The Journal of the Acoustical Society of America},
  volume={136},
  number={2},
  pages={693--704},
  year={2014},
  publisher={AIP Publishing}
}

@article{BERGER1968254,
title = {Rate distortion theory for sources with abstract alphabets and memory},
journal = {Information and Control},
volume = {13},
number = {3},
pages = {254-273},
year = {1968},
issn = {0019-9958},
doi = {https://doi.org/10.1016/S0019-9958(68)91123-6},
url = {https://www.sciencedirect.com/science/article/pii/S0019995868911236},
author = {Toby Berger},
abstract = {This paper is devoted to the formulation and proof of an abstract alphabet version of the fundamental theorem of Shannon's rate distortion theory. The validity of the theorem is established for both discrete and continuous parameter information sources satisfying a certain regularity condition intermediate in restrictiveness between ergodicity and weak mixing. For ease of presentation, only single letter fidelity criteria are considered during the main development after which various generalizations are indicated.}
}

@misc{gao2019rate,
      title={Rate Distortion For Model Compression: From Theory To Practice}, 
      author={Weihao Gao and Yu-Han Liu and Chong Wang and Sewoong Oh},
      year={2019},
      eprint={1810.06401},
      archivePrefix={arXiv},
      primaryClass={cs.IT}
}

@ARTICLE{7728150,
  author={Iglesias, Pablo A.},
  journal={IEEE Transactions on Molecular, Biological, and Multi-Scale Communications}, 
  title={The Use of Rate Distortion Theory to Evaluate Biological Signaling Pathways}, 
  year={2016},
  volume={2},
  number={1},
  pages={31-39},
  keywords={Mutual information;Computational systems biology;Rate distortion theory;Biological interactions;Biological information theory;Biological information theory;computational systems biology;rate distortion theory;biological interactions},
  doi={10.1109/TMBMC.2016.2623600}}

@book{gallager2008principlesdig,
  title={Principles of Digital Communication},
  author={Gallager, R.G.},
  isbn={9781139468602},
  url={https://books.google.com.lb/books?id=5W0aYFU02igC},
  year={2008},
  publisher={Cambridge University Press}
}

@article{samorodnitsky1996stable,
  title={{Stable non-Gaussian random processes: stochastic models with infinite variance}},
  author={Samorodnitsky, Gennady and Taqqu, Murad S and Linde, RW},
  journal={Bulletin of the London Mathematical Society},
  volume={28},
  number={134},
  pages={554--555},
  year={1996},
  publisher={[London: London Mathematical Society], 1969-}
}

@article{LLoydMax,
 title={{Least squares quantization in PCM}},
  author={Lloyd, S.},
  journal={IEEE Transactions on Information Theory},  
  year={1982},
  volume={28},
  number={2},
  pages={129-137},
  keywords={},
  doi={10.1109/TIT.1982.1056489}}

@article{hubara2016binarized,
  title={Binarized neural networks},
  author={Hubara, Itay and Courbariaux, Matthieu and Soudry, Daniel and El-Yaniv, Ran and Bengio, Yoshua},
  journal={Advances in neural information processing systems},
  volume={29},
  year={2016}
}

@article{gong2014compressing,
  title={Compressing deep convolutional networks using vector quantization},
  author={Gong, Yunchao and Liu, Liu and Yang, Ming and Bourdev, Lubomir},
  journal={arXiv preprint arXiv:1412.6115},
  year={2014}
}

@article{quantimages,
  title={Using vector quantization for image processing},
  author={Cosman, Pamela C and Oehler, Karen L and Riskin, Eve A and Gray, Robert M},
  journal={Proceedings of the IEEE},
  volume={81},
  number={9},
  pages={1326--1341},
  year={1993},
  publisher={IEEE}
}

@article{sensors,
  title={Target location estimation in sensor networks with quantized data},
  author={Niu, Ruixin and Varshney, Pramod K},
  journal={IEEE Transactions on Signal Processing},
  volume={54},
  number={12},
  pages={4519--4528},
  year={2006},
  publisher={IEEE}
}

@INPROCEEDINGS{ieesp1,
  author={Choukroun, Yoni and Kravchik, Eli and Yang, Fan and Kisilev, Pavel},
  booktitle={2019 IEEE/CVF International Conference on Computer Vision Workshop (ICCVW)}, 
  title={Low-bit Quantization of Neural Networks for Efficient Inference}, 
  year={2019},
  volume={},
  number={},
  pages={3009-3018},
  keywords={Quantization (signal);Artificial neural networks;Training;Tensile stress;Task analysis;Hardware;Optimization;Neural Network;Quantization;Post training;MMSE},
  doi={10.1109/ICCVW.2019.00363}}

@ARTICLE{hardwarequantization,
  author={Shlezinger, Nir and Eldar, Yonina C. and Rodrigues, Miguel R. D.},
  journal={IEEE Transactions on Signal Processing}, 
  title={Hardware-Limited Task-Based Quantization}, 
  year={2019},
  volume={67},
  number={20},
  pages={5223-5238},
  keywords={Quantization (signal);Task analysis;Distortion;Distortion measurement;Estimation;Source coding;Signal resolution;Quantization;analog-to-digital conversion},
  doi={10.1109/TSP.2019.2935864}}

@article{quantHeavy,
  title={Scalar quantisation of heavy-tailed signals},
  author={Tsakalides, P and Reveliotis, P and Nikias, CL},
  journal={IEE Proceedings-Vision, Image and Signal Processing},
  volume={147},
  number={5},
  pages={475--484},
  year={2000},
  publisher={IET}
}

@article{msealphastable,
	title = {Rate-distortion function for alpha-stable sources},
	author = {Kuruoglu, E and Wang, J},
	publisher = {Elsevier, Urban u. Fischer, München , Germania},
	doi = {10.1016/j.aeue.2016.04.011},
	journal = {AEÜ. International journal of electronics and               communications (Print)},
	volume = {70},
	pages = {974–978},
	year = {2016}
}

@article{blahut1972computation,
  title={Computation of channel capacity and rate-distortion functions},
  author={Blahut, Richard},
  journal={IEEE transactions on Information Theory},
  volume={18},
  number={4},
  pages={460--473},
  year={1972},
  publisher={IEEE}
}

@misc{xiao2024smoothquantaccurateefficientposttraining,
      title={SmoothQuant: Accurate and Efficient Post-Training Quantization for Large Language Models}, 
      author={Guangxuan Xiao and Ji Lin and Mickael Seznec and Hao Wu and Julien Demouth and Song Han},
      year={2024},
      eprint={2211.10438},
      archivePrefix={arXiv},
      primaryClass={cs.CL},
      url={https://arxiv.org/abs/2211.10438}, 
}

@misc{liu2025lowbitmodelquantizationdeep,
      title={Low-bit Model Quantization for Deep Neural Networks: A Survey}, 
      author={Kai Liu and Qian Zheng and Kaiwen Tao and Zhiteng Li and Haotong Qin and Wenbo Li and Yong Guo and Xianglong Liu and Linghe Kong and Guihai Chen and Yulun Zhang and Xiaokang Yang},
      year={2025},
      eprint={2505.05530},
      archivePrefix={arXiv},
      primaryClass={cs.LG},
      url={https://arxiv.org/abs/2505.05530}, 
}

@article{clauset2009power,
  title={Power-law distributions in empirical data},
  author={Clauset, Aaron and Shalizi, Cosma Rohilla and Newman, Mark EJ},
  journal={SIAM review},
  volume={51},
  number={4},
  pages={661--703},
  year={2009},
  publisher={SIAM}
}

@article{crovella1998heavy,
  title={Heavy-tailed probability distributions in the World Wide Web},
  author={Crovella, Mark E and Taqqu, Murad S and Bestavros, Azer},
  journal={A practical guide to heavy tails},
  volume={1},
  pages={3--26},
  year={1998}
}

@ARTICLE{kuzer2004,
  author={Kuruoglu, E.E. and Zerubia, J.},
  journal={IEEE Transactions on Image Processing}, 
  title={{Modeling SAR images with a generalization of the Rayleigh distribution}}, 
  year={2004},
  volume={13},
  number={4},
  pages={527-533},
  doi={10.1109/TIP.2003.818017}}

@INPROCEEDINGS{Tsih1995,
  author={Tsihrintzis, G.A. and Tsakalides, P. and Nikias, C.L.},
  booktitle={Conference Record of The Twenty-Ninth Asilomar Conference on Signals, Systems and Computers}, 
  title={Signal detection in severely heavy-tailed radar clutter}, 
  year={1995},
  volume={2},
  number={},
  pages={865-869 vol.2},
  keywords={Signal detection;Radar signal processing;Interference;Signal processing;Object detection;Radar clutter;Sensor arrays;Signal processing algorithms;Testing;Statistics},
  doi={10.1109/ACSSC.1995.540823}}

@inproceedings{marcel2008,
	Address = {Las Vegas, NV},
	Author = {Nassar, M. and Gulati, K. and Sujeeth, A.K. and Aghasadeghi, N. and Evans, B.L. and Tinsley, K.R.},
	Booktitle = {IEEE International Conference on Acoustics, Speech and Signal Processing},
	Date-Modified = {2016-09-28 16:34:38 +0000},
	Doi = {10.1109/ICASSP.2008.4517882},
	Issn = {1520-6149},
	Keywords = {Bayesian detector;Middleton class A model;Wiener filter;additive impulsive noise;computational complexity;correlation receiver;data transceivers;laptop embedded transceivers;near-field interference mitigation;parameter estimation algorithms;radiofrequency interference;signal detection;symmetric alpha stable model;wireless transceivers;Wiener filters;computational complexity;impulse noise;interference (signal);interference suppression;laptop computers;parameter estimation;transceivers;},
	Month = {30 March - 4 April},
	Pages = {1405 -1408},
	Title = {Mitigating near-field interference in laptop embedded wireless transceivers},
	Year = {2008},
	Bdsk-Url-1 = {http://dx.doi.org/10.1109/ICASSP.2008.4517882}}

@article{nolancirc2013,
	Abstract = {Stable distributions with elliptical contours are a class of distributions that are useful for modeling heavy tailed multivariate data. This paper describes the theory of such distributions, presents formulas for calculating their densities, and methods for fitting the data and assessing the fit. Efficient numerical routines are implemented and evaluated in simulations. Applications to data sets of a financial portfolio with 30 assets and to a bivariate radar clutter data set are presented.},
	Author = {Nolan, John P.},
	Da = {2013/10/01},
	Date-Added = {2023-12-22 11:20:08 +0200},
	Date-Modified = {2023-12-22 11:20:08 +0200},
	Doi = {10.1007/s00180-013-0396-7},
	Id = {Nolan2013},
	Isbn = {1613-9658},
	Journal = {Computational Statistics},
	Number = {5},
	Pages = {2067--2089},
	Title = {Multivariate elliptically contoured stable distributions: theory and estimation},
	Ty = {JOUR},
	Url = {https://doi.org/10.1007/s00180-013-0396-7},
	Volume = {28},
	Year = {2013},
	Bdsk-Url-1 = {https://doi.org/10.1007/s00180-013-0396-7}}

@Inbook{nbrm2001,
author="Nolan, John P.",
title="Maximum Likelihood Estimation and Diagnostics for Stable Distributions",
bookTitle="L{\'e}vy Processes: Theory and Applications",
year="2001",
publisher="Birkh{\"a}user Boston",
address="Boston, MA",
pages="379--400",
abstract="A program for maximum likelihood estimation of general stable parameters is described. The Fisher information matrix is computed, making large sample estimation of stable parameters a practical tool. In addition, diagnostics are developed for assessing the stability of a data set. Applications to simulated data, stock price data, foreign exchange rate data, radar data, and ocean wave energy are presented.",
isbn="978-1-4612-0197-7",
doi="10.1007/978-1-4612-0197-7\_17",
url="https://doi.org/10.1007/978-1-4612-0197-7\_17"
}

@book{kolmo,
	Address = {Reading Massachusetts},
	Author = {{Gnedenko, B. V. and Kolmogorov, A. N. }},
	Date-Modified = {2015-10-20 22:09:17 +0000},
	Publisher = {Addison-Wesley Publishing Company},
	Title = {Limit Distributions for Sums of Independent Random Variables},
	Year = {1968}}

@article{rioulEPI2011,
author = {Rioul, O.},
title = {Information Theoretic Proofs of Entropy Power Inequalities},
year = {2011},
issue_date = {January 2011},
publisher = {IEEE Press},
volume = {57},
number = {1},
issn = {0018-9448},
url = {https://doi.org/10.1109/TIT.2010.2090193},
doi = {10.1109/TIT.2010.2090193},
abstract = {While most useful information theoretic inequalities can be deduced from the basic properties of entropy or mutual information, up to now Shannon's entropy power inequality (EPI) is an exception: Existing information theoretic proofs of the EPI hinge on representations of differential entropy using either Fisher information or minimum mean-square error (MMSE), which are derived from de Bruijn's identity. In this paper, we first present an unified view of these proofs, showing that they share two essential ingredients: 1) a data processing argument applied to a covariance-preserving linear transformation; 2) an integration over a path of a continuous Gaussian perturbation. Using these ingredients, we develop a new and brief proof of the EPI through a mutual information inequality, which replaces Stam and Blachman's Fisher information inequality (FII) and an inequality for MMSE by Guo, Shamai, and Verd\'{u} used in earlier proofs. The result has the advantage of being very simple in that it relies only on the basic properties of mutual information. These ideas are then generalized to various extended versions of the EPI: Zamir and Feder's generalized EPI for linear transformations of the random variables, Takano and Johnson's EPI for dependent variables, Liu and Viswanath's covariance-constrained EPI, and Costa's concavity inequality for the entropy power.},
journal = {IEEE Trans. Inf. Theor.},
month = jan,
pages = {33–55},
numpages = {23},
keywords = {relative entropy, mutual information, minimum mean-square error (MMSE), entropy power inequality (EPI), divergence, differential entropy, de Bruijn's identity, Fisher information inequality (FII), Fisher information, Data processing inequality}
}

@article{GishAndPierce,
  author={Gish, H. and Pierce, J.},
  journal={IEEE Transactions on Information Theory}, 
  title={Asymptotically efficient quantizing}, 
  year={1968},
  volume={14},
  number={5},
  pages={676-683},
  keywords={Entropy;Rate-distortion;Density functional theory;Random variables;Lower bound;Mutual information;Mean square error methods;Writing;Upper bound;Minimization},
  doi={10.1109/TIT.1968.1054193}}

@ARTICLE{jipeng2024,
  author={Li, Jipeng and Yuan, Xueqiong and Kuruoglu, Ercan Engin},
  journal={IEEE Transactions on Artificial Intelligence}, 
  title={Exploring Weight Distributions and Dependence in Neural Networks With $\alpha$-Stable Distributions}, 
  year={2024},
  volume={5},
  number={11},
  pages={5519-5529},
  keywords={Neural networks;Heavily-tailed distribution;Convergence;Gaussian distribution;Kernel;Correlation;Bayes methods; $\alpha$ -stable distributions;neural network;weight distribution},
  doi={10.1109/TAI.2024.3409673}}

@InProceedings{bellido1993,
author="Bellido, I.
and Fiesler, E.",
editor="Gielen, Stan
and Kappen, Bert",
title="Do Backpropagation Trained Neural Networks have Normal Weight Distributions?",
booktitle="ICANN '93",
year="1993",
publisher="Springer London",
address="London",
pages="772--775",
abstract="Although artificial neural networks are employed in an ever growing variety of applications, their inner workings are still viewed as a black box, which is due to the complexity of the non-linear dynamics that govern neural network learning. The key parameters in this learning process are the so called interconnection strengths or weights of the connections between the neurons. Because of the lack of data, mathematical approaches for studying the `inside' of neural networks have to resort to assumptions like a Normal distribution of the weight values. In order to better understand what goes on inside neural networks, a thorough study of the real probability distribution of the weight values is important. Besides this, knowledge about weight distributions is also a main ingredient for weight reduction schemes enabling the creation of partially connected neural networks and for network capacity calculations. This paper reports on the findings of an extensive empirical study of the distributions of weights in backpropagation neural networks, and tests formally whether the weights of a trained neural network have indeed a Normal distribution.",
isbn="978-1-4471-2063-6"
}
\appendices

\section{Some Properties of Stable Distributions and of $s_{\alpha}(\vX)$}
\label{app:prop}
In this appendix, we list relevant properties of stable distribution in addition to properties of the $\alpha$-power $s_{\alpha}(\cdot)$. 

\begin{properties}
\label{prop}
Let X $\sim \Spdf{\beta}{\gamma}{\delta}$. Then
\begin{enumerate}
    
    \item Let $c \in \Reals$. Then, $X+c \sim \Spdf{\beta}{\gamma}{\delta + c}$. Also,
    \begin{equation*} 
    \begin{array}{lll}
        \displaystyle c X  & \hspace{-7pt}\sim \Spdf{\sgn(c)\beta}{|c|\gamma}{c\delta}, & \text{if } \alpha \neq 1 \\
        \displaystyle c X  & \hspace{-7pt}\sim \Spdf{\sgn(c)\beta}{|c|\gamma}{c\delta - \frac{2}{\pi} c\gamma\beta\ln|c|} & \text{o.w.}
        \end{array}
    \end{equation*}
    
    
    \item When $\alpha \neq 2$, $\E{|X|^p} < \infty$ if and only if $p < \alpha$.

    \item Whenever $X \sim \mathcal{S}(\alpha,\gamma)$ is symmetric S$\alpha$S, then $\pz(\cdot)$ is decreasing with $|x|$. 
    
    \item Let $X_k \sim \Spdf{\beta_k}{\gamma_k}{\delta_k}$, $k = 1, 2$, be two independent stable variables. Then $X_1 + X_2 \sim \Spdf{\beta}{\gamma}{\delta}$, where 
    \[
    \delta = \delta_1 + \delta_2 \quad \beta = \frac{\beta_1\gamma_1^{\alpha}+\beta_2\gamma_2^{\alpha}}{\gamma_1^{\alpha}+{\gamma_2^{\alpha}}} \quad \gamma = (\gamma_1^{\alpha} + \gamma_2^{\alpha})^{\frac{1}{\alpha}}.
    \]
    
    \item Let $\vect{X} \sim \textbf{S}(\alpha,\gamma_{\vect{X}})$ independent of $\vect{Y} \sim \textbf{S}(\alpha,\gamma_{\vect{Y}})$. Then $\vect{X} + \vect{Y} \sim \textbf{S}\left( \alpha,(\gamma_{\vect{X}}^\alpha+\gamma_{\vect{Y}}^\alpha)^{\frac{1}{\alpha}} \right)$.    
\end{enumerate}
\end{properties} 

\begin{proof}
    See~\cite{samorodnitsky1996stable} and~\cite{nolan2020univariate}.
\end{proof} 

\begin{properties}
\label{prop:stableprop}
Let \Xb be a random vector such that:
\[
    \left\{ \begin{array}{ll}
    \E{ \log \left( 1+\left\|\Xb\right\| \right) } < \infty \quad &  \alpha < 2 \vspace{5pt} \\ 
    \E{\|\Xb\|^2} < \infty \quad &  \alpha =2,
\end{array} \right.
\]
and let $\atilde{\vect{Z}} \sim 
\textbf{S} \left(\alpha, \left( \frac{1}{\alpha} \right)^{\frac{1}{\alpha}} \right)$ be {\em a reference\/} $d$-dimensional \SaS vector. Then,
\begin{enumerate}
    \item The strength \( s_\alpha(\Xb) \) exists and except for the deterministically zero vector, it is the unique solution to the equation \( -\E{ \log \, \pzv \left(\frac{\Xb}{s_\alpha(\Xb)}\right) } = h(\atilde{\vect{Z}})\).
    \item \( s_\alpha(\Xb) \geq 0 \) with equality if and only if \( \Xb=0 \).
    \item \( \forall c \in \Reals \), \( s_\alpha(c\Xb)= \left|c\right|s_\alpha(\Xb) \).
    \item Whenever $\Xb \sim \textbf{S}(\alpha,\gamma)$,
        $s_{\alpha}(\Xb) = (\alpha)^{\frac{1}{\alpha}} \gamma$.
    \item $g(P) \eqdef - \E{ \log  \pzv \left(\frac{\Xb}{P} \right) }$ is non-increasing and continuous in \( P \) for $ P > 0 $ .
\end{enumerate}
\end{properties}
\begin{proof}
We refer the reader to \cite{fahs2017information}
\end{proof}



\section{Proofs of Theorems 2,3 and 4}
\label{app:proofs}
In this appendix, we provide the proofs of Theorems \ref{th:extstable}, \ref{th:extToVectors1} and \ref{th:extToVectors2} stated in Section \ref{sc:RD}
\subsection{Proof of Theorem \ref{th:extstable}}
\noindent
\begin{proof}
We first show that 
\begin{equation*}
\phi_{X}(D) = \underset{P_{\hat{X}|X}: s_\alpha(X-\hat{X}) \leq D}{\inf} I(X;\hat{X}),
\end{equation*}
evaluates to \eqref{eq:star}.

When $D \geq s_\alpha(X)$, one can pick $\hat{X}=0$ deterministically as a reproduced character, and we get $I(X;\hat{X})= h(X) - h(X) = 0$ with $s_\alpha(X-\hat{X}) = s_\alpha(X) \leq D$, therefore achieving $\phi_{X}(D) = 0 $.

On the other hand, when $0< D < s_\alpha(X)$, we find a lower bound on $I(X;\hat{X})$ among all $P_{\hat{X}|X}$ such that $s_\alpha(X-\hat{X}) \leq D$ by noting the following:
\begin{align}
    I(X;\hat{X}) &= h(X)-h(X|\hat{X})\notag \\ 
    &= h(X)-h(X-\hat{X}|\hat{X})\notag \\
    &\geq h(X)-h(X-\hat{X})  \label{eq:condent}\\
    &= h(s_\alpha(X)\tilde{Z}_\alpha)-h(X-\hat{X}) \label{eq:hEqual} \\
    &\geq  h(s_\alpha(X)\tilde{Z}_\alpha)-h(s_\alpha(X-\hat{X})\tilde{Z}_\alpha) \label{eq:entmax}\\
    &=\log \left[ \frac{s_\alpha(X)}{s_\alpha(X-\hat{X})} \right] \notag \\
    &\geq \log \left[ \frac{s_\alpha(X)}{D} \right] = \frac{1}{\alpha}\log(\alpha) + \log\left(\frac{\gamma_X}{D}\right), \label{eq:endres}
\end{align}
where equation~(\ref{eq:condent}) is due to the fact that conditioning cannot increase entropy. Equation~\eqref{eq:hEqual} is substantiated by
\begin{align*}
    s_{\alpha} (\atilde{Z}) & = 1 \\
    \implies s_{\alpha} ( s_\alpha(X) \, \atilde{Z} ) & = s_\alpha(X) \\
    \implies \hspace{5pt} h ( s_\alpha(X) \atilde{Z} ) & = h(X),
    \quad \text{when } X \sim \text{\SaS},
\end{align*}
through Properties~\ref{prop:stableprop} of the strength measure (See Appendix~\ref{app:prop}).
Equation~(\ref{eq:entmax}) follows from the fact that stable distributions are entropy maximizers under a strength constraint~\cite[Theorem 1]{fahs2017information}. Finally, the inequality in~(\ref{eq:endres}) follows from $s_\alpha(X-\hat{X}) \leq D$ and the fact that $\log(\cdot)$ is non-decreasing and uses the fact that $s_{\alpha}(X) = (\alpha)^{\frac{1}{\alpha}} \gamma_X$ (see Property~\ref{prop:stableprop}-4 in Appendix~\ref{app:prop}). 

 The lower bound in~(\ref{eq:endres}) is achieved by choosing $\hat{X}$ such that $ X= \hat{X} + Z$ where $Z$ and $\hat{X}$ are independent \SaS with $s_\alpha(Z) = s_\alpha(X-\hat{X}) = D$  and $s_\alpha(\hat{X})= \left(s_\alpha(X)^{\alpha}-D^{\alpha}\right)^{\frac{1}{\alpha}}$ (see Property~\ref{prop}-4). 
Hence, $\phi_{X}(D)$ evaluates to~\eqref{eq:star}.

It remains to show that $R(D) = \RID=\phi_{X}(D)$. The considered setup departs from the conventional one in that the distortion constraint is imposed on $s_\alpha(X-\hat{X})$ and is not of the usual form $\E{d(X, \hat{X})}$. The distortion ``strength" $s_{\alpha}(\cdot)$ is an operator on RVs and not an average. This results in two main challenges:
\begin{itemize}
\item The first one pertaining to the convexity requirement of $\phi_{X}(D)$ which is crucial to the proof of the converse, as highlighted after the statement of Theorem~\ref{th:RDT}. In this work, the distortion constraint is imposed on $s_\alpha(X-\hat{X})$ which is neither linear nor necessarily convex in the distributions, and so the convexity of $\phi_X(D)$ is not a given and must be explicitly verified.

\smallskip
\item Secondly, the proof of the direct part $R(D) \leq \RID$ generally involves a random coding argument that relies on a per-character distortion $d(x,\hat{x})$ that is extended to sequences using separability (see~\cite[p.425-427]{Polyanskiy_Wu_2024} for more details).
However, with the distortion being the ``strength" $s_{\alpha}(\cdot)$ of the error, the standard techniques are no longer suitable. 
\end{itemize}

In order to provide a proof that leverages the results of Theorem~\ref{th:RDT}, we make the following observation: 
\begin{equation}
\label{eq:equiv}
s_\alpha(X - \hat{X}) \leq D \iff -\E{\log \pz \left(\frac{X-\hat{X}}{D}\right)} \leq h(\tilde{Z}_{\alpha}).
\end{equation}
Indeed, the ``$\Longleftarrow$" direction simply follows from the definition of $s_\alpha(\cdot)$ (Definition~\ref{def:str}). On the other hand, the ``$\Longrightarrow$" direction is due to the fact that $g(P) =  -\E{\log \pz \left(\frac{X-\hat{X}}{P}\right)}$ is non-increasing and continuous in $P > 0$ as stated in Property~\ref{prop:stableprop}-5. Hence, an equivalent problem to the one at hand is the following: 

\noindent
Let $0 < D < s_{\alpha}(X)$,
\begin{enumerate}[label = \textbullet]
    \item Define $\tilde{d}(x,\hat{x},D) = -\log \pz \left(\frac{x-\hat{x}}{D}\right)$, $\tilde{d}(x^n,\hat{x}^n,D) = \frac{1}{n} \sum_{i=1}^{n} -\log \pz \left(\frac{x_i-\hat{x}_i}{D}\right)$.
    \item  Pick some $P_{\hat{X}|X}$ such that $-\E{\log \pz \left(\frac{X-\hat{X}}{D}\right)} \leq h(\tilde{Z}_{\alpha})$.
\end{enumerate}
By defining
\[
\Tilde{\phi}_X(D)  \eqdef \underset{P_{\hat{X}|X}: -\E{\log \pz \left(\frac{X-\hat{X}}{D}\right)} \leq h(\tilde{Z}_{\alpha})}{\min} I(X;\hat{X}),
\]
we obtain from equation~\eqref{eq:equiv} that
\[
\phi_X(D) = \Tilde{\phi}_X(D).
\]
Furthermore, we have
\[
\RID = \tRID = \underset{n \to \infty}{\limsup} \ \frac{1}{n} \Tilde{\phi}_{X^{n}}(D).
\]
In the achievability part, we choose \( \hat{X}^n \) such that each \( \hat{X} \) achieves \( \phi_X(D) \). As a result, we choose \( \hat{X}^n \overset{\textup{i.i.d.}}{\sim} P_{\hat{X}} \), where
\[
s_\alpha(\hat{X}) = \left( s_\alpha(X)^{\alpha} - D^{\alpha} \right)^{\frac{1}{\alpha}}.
\]
This IID choice allows us to extend equation~\eqref{eq:equiv} to
\begin{equation}
\label{eq:equivExt}
\frac{1}{n} \sum_{i=1}^{n} s_\alpha(X_i - \hat{X}_i) \leq D
\iff 
- \frac{1}{n} \sum_{i=1}^{n} \E{\log \pz \left(\frac{X_i - \hat{X}_i}{D}\right)} \leq h(\tilde{Z}_{\alpha}).
\end{equation}
As such, any \((n, M, D)\)-code constructed for the distortion measure \( \tilde{d}(\cdot, \cdot, D) \) will also work for \( s_\alpha(\cdot) \), and it follows that
\[
R(D) = \tilde{R}(D).
\]
We make the following observations:
\begin{enumerate} [label = \textbullet]
    \item By definition $\tilde{d}(\cdot,\cdot,D)$ is separable. Moreover, for all $x, y$,
    \begin{equation*}
        - \log  \pz \left(\frac{x-y}{D}\right) \geq - \log \pz\left( 0 \right) \quad (> -\infty)
    \end{equation*} since $\pz(\cdot)$ is maximized at $0$. 
    
    One can simply subtract $- \log \pz\left( 0 \right)$ from $\tilde{d}(\cdot,\cdot)$ to make it non-negative.
    \item Examining \eqref{eq:star},
    \begin{align*}
    D_0 & = \inf \big\{ D : \tilde{\phi}_{X}(D) < \infty \big\} \\ 
    & =  \inf \big\{ D : {\phi}_{X}(D) < \infty \big\} = 0.
    \end{align*}
    \item  $D_{\max} =
    \underset{\hat{x}}{\inf}
    \, \left\{ \E{\tilde{d}(X,\hat{x},D)} \right\} \leq \E{\tilde{d}(X,0,D)} < \infty$.
    \item $\tilde{\phi}_X(D) = \phi_X(D)$ is convex in $D$ as can be clearly seen in the expression of $\phi_X(D)$ which was shown to be equal to \eqref{eq:star}. 
\end{enumerate}
Hence, we can invoke Theorem~\ref{th:RDT} so that $\tilde{R}(D) = {\tRID} = \Tilde{\phi}_X(D)$ implying $R(D) = \RID = \phi_X(D)$.
\end{proof}

\subsection{Proof of Theorem \ref{th:extToVectors1}}
\begin{proof}
To prove that $R(D) = \phi_{\Xb}(D)$, one can check that the conditions of Theorem 1 hold as done in the scalar case and that $\phi_{\Xb}(D)$ is a convex function of $D$. Showing that the expression of  $\phi_{\Xb}(D) = \underset{P_{\hat{\Xb}|\Xb}: s_\alpha(\Xb-\hat{\Xb}) \leq D}{\inf} I(\Xb;\hat{\Xb})$ evaluates to \eqref{eq:starstar}
and expressing the distortion framework at hand in the form~\eqref{eq:equiv} --compatible with Theorem~\ref{th:RDT}, can be done in an identical fashion to what was done in the proof of Theorem~\ref{th:extstable}.

\end{proof}
\subsection{Proof of Theorem \ref{th:extToVectors2}}
\begin{proof}
As in the previous theorem, one can check that the conditions of Theorem 1 hold by using the equivalency in~\eqref{eq:equiv} in addition to checking  that $\phi_{\Xb}(D)$ is convex in $D$ and we will have $R(D)=\phi_{\Xb}(D)$. 

In what follows, we derive an expression for $\phi_{\Xb}(D)$, by finding an achievable lower bound for $I(\Xb; \hat{\Xb})$ among all $P_{\hat{\Xb}|\Xb}$ such that $ \sum^{d}_{i=1} s_\alpha(X_i-\hat{X}_i) \leq D$:
    \begin{align}
    I(\Xb;\hat{\Xb}) &= h(\Xb)-h(\Xb|\hat{\Xb})\notag \\
                     &= \sum_{i=1}^{d} h(X_i) + \sum_{i=1}^{d} h(X_i|\hat{\Xb},X_1,...,X_{i-1})\notag \\
                     &\geq \sum^{d}_{i=1} h(X_i) -\sum^{d}_{i=1} h(X_i|\hat{X}_i) \label{eq:iid}\\
                     &= \sum^{d}_{i=1} I(X_i;\hat{X}_i) \notag\\
                     &\geq \sum^{d}_{i=1} R(D_i) \label{eq:scalarcase}\\
                     &= \sum^{d}_{i=1} \max \left\{ \log \left[ \frac{s_\alpha(X_i)}{D_i} \right], 0 \right\} \notag
    \end{align}
where $D_i = s_\alpha(X_i-\hat{X}_i)$. We can achieve equality in equation~(\ref{eq:iid}) by choosing $p \left( x^d | \hat{x}^d \right) = \prod^{d}_{i=1} p(x_i|\hat{x}_i)$ and in equation~(\ref{eq:scalarcase}) by choosing the distribution of each $\hat{X}_i \sim S(\alpha,\gamma_{\hat{X}_i})$ as per the results of Theorem~\ref{th:extstable}. The value of each $\gamma_{\hat{X}_i}$, $1 \leq i \leq d$ is chosen such that $D_i = s_\alpha(X_i-\hat{X}_i)$ matches its prescribed value by the following optimization scheme:
\begin{equation}
\phi_\Xb(D) = \underset{\Sigma D_i = D}{\min} \left[ \sum^{d}_{i=1} \max \left\{ \log \left[ \frac{s_\alpha(X_i)}{D_i} \right], 0 \right\} \right].
\label{eq:WFPb}
\end{equation}

In \cite[p.313-315]{Cover2006}, it is shown that the solution to such a problem gives rise to a kind of reverse water-filling solution which is exactly \eqref{eq:starstarstar}, an expression that is convex in $D$. Indeed, let $\{D^*_{0 i} \}_i$ and $\{ D^*_{1i} \}_i$ be the optimal solutions of~\eqref{eq:WFPb} for distortions $D_0$ and $D_1$ respectively. For any $0 \leq \theta \leq 1$ denote by $D_\theta = \theta D_1 + (1-\theta)D_0$. Since $\sum_{i=1}^{d} D^*_{0i} = D_0$ and $\sum_{i=1}^{d} D^*_{1i} = D_1$, then $\sum_{i=1}^{d} \theta D^*_{1i} + (1-\theta) D^*_{0i} = D_\theta$ and $\{\theta D^*_{1i} + (1-\theta) D^*_{0i}\}_i$ are feasible for $D_\theta$. This implies that the set over which we are minimizing is convex (in fact, it is linear). Furthermore,

\begin{align}
&\theta \phi_X(D_1) + (1-\theta)\phi_X(D_0) \notag \\
= \,\, &\sum^{d}_{i=1}  \theta \log \left[ \frac{s_\alpha(X_i)}{D^*_{1i}} \right] +  (1-\theta)\log \left[ \frac{s_\alpha(X_i)}{D^*_{0i}} \right] \notag \\
= \,\, &\sum^{d}_{i=1} \log(s_\alpha(X_i)) - \theta \log \left({D^*_{1i}} \right) -  (1-\theta)\log \left({D^*_{0i}} \right) \notag \\
\geq \,\, & \sum^{d}_{i=1}  \log(s_\alpha(X_i)) - \log \left(\theta D^*_{1i} + (1-\theta)D^*_{0i} \right) \label{eq:log} \\
= \,\, & \sum^{d}_{i=1}  \log \left[ \frac{s_\alpha(X_i)}{\theta D^*_{1i} + (1-\theta)D^*_{0i}} \right]
\geq \,\, \phi_X(D_\theta), \label{eq:WF}
\end{align}
where the inequality in~\eqref{eq:log} is due to the convexity of $-\log(.)$ and in ~\eqref{eq:WF} since $\phi_{\vX}(D)$ is the minimum for all feasible $\{D_i\}_i$'s.

\end{proof}

\end{document}